\theoremstyle{plain}
\newtheorem{theorem}{Theorem}
\newtheorem{corollary}{Corollary}
\newtheorem{lemma}{Lemma}
\newtheorem{proposition}{Proposition}
\theoremstyle{definition}
\newtheorem{definition}{Definition}
\theoremstyle{remark}
\newcommand{\Cech}{\v{C}ech}
\def\BibTeX{{\rm B\kern-.05em{\sc i\kern-.025em b}\kern-.08em
    T\kern-.1667em\lower.7ex\hbox{E}\kern-.125emX}}
\begin{document}

\title{Fundamental Limits of Quantum Semantic Communication via Sheaf Cohomology}
\author{\IEEEauthorblockN{Christo Kurisummoottil Thomas}
\IEEEauthorblockA{\textit{Electrical and Computer Engineering} \\
\textit{Worcester Polytechnic Institute,Worcester, MA, USA}
 \\
cthomas2@wpi.edu}
\and
\IEEEauthorblockN{Mingzhe Chen}
\IEEEauthorblockA{\textit{Electrical and Computer Engineering} \\
\textit{University of Miami,
Miami, FL, USA} \\
mingzhe.chen@miami.edu}
}
\maketitle
\vspace{-18mm}
\begin{abstract}
Semantic communication (SC) enables bandwidth-efficient coordination in multi-agent systems by transmitting meaning rather than raw bits. However, when agents maintain heterogeneous internal representations shaped by distinct sensing modalities and artificial intelligence (AI) architectures, perfect bit-level transmission no longer guarantees mutual understanding. Despite significant progress in deep learning approaches for semantic compression, the information-theoretic foundations characterizing fundamental limits of semantic alignment  under heterogeneous agents remain incomplete.  Notably, quantum contextuality shares the same mathematical structure as semantic ambiguity, as both correspond to cohomological obstructions, motivating a quantum framework.
Inspired by this, in this paper, an information-theoretic framework for quantum SC is developed using sheaf cohomology. Multi-agent semantic networks are modeled as quantum sheaves, where agents' meaning spaces are Hilbert spaces connected by quantum channels. The first sheaf cohomology group $H^1$ is shown to precisely characterize irreducible semantic ambiguity, i.e., the fundamental obstruction to perfect alignment that no local processing can resolve. The minimum communication rate for semantic alignment is proven to equal $\log_2 \dim H^1$, establishing a semantic analog of Shannon's fundamental limits. For entanglement-assisted channels, the capacity is shown to strictly exceed classical bounds, with each ebit of shared entanglement saving one bit of classical communication. This provides a rigorous mechanism for the ``shared context'' often assumed in classical SC. Additionally, quantum contextuality is established as a resource that reduces cohomological obstructions, and a duality between quantum discord and integrated semantic information is proven, linking quantum correlations to irreducible semantic content. The framework provides rigorous foundations for quantum-enhanced SC in autonomous multi-agent systems.

\end{abstract}


\vspace{-2mm}\section{Introduction}
Semantic communication (SC) has emerged as a paradigm shift in communication theory, moving beyond Shannon's classical framework of transmitting bits to transmitting meaning \cite{KountourisCommMag2021}. This approach promises bandwidth savings in multi-agent systems where users share common objectives but may maintain heterogeneous internal representations shaped by distinct sensing modalities and artificial intelligence (AI) architectures. However, this heterogeneity poses a fundamental challenge: when users interpret symbols through different semantic lenses, perfect bit-level transmission no longer guarantees mutual understanding.

Consider two robots coordinating a manipulation task. Each robot maintains an internal world model encoding spatial relationships, object properties, and action affordances. When one robot communicates ``move left,'' the meaning depends on reference frames, learned associations, and task context that may differ between agents. The robots face an irreducible ambiguity stemming not from channel noise, but from their heterogeneous representations. This motivates our central questions: \emph{What is the minimum communication rate to resolve such ambiguity? Can shared quantum resources help?}
Classical SC models meanings as vectors in real spaces, with alignment achieved when agents' representations, projected onto a shared semantic space, agree \cite{pandolfo2025latent}. But this framework cannot capture contextuality, where a symbol's meaning depends on what other symbols are jointly queried and the history of prior exchanges. Static vector projections \cite{pandolfo2025latent} cannot represent such dependencies. Mathematically, contextuality manifests as a \emph{topological obstruction}: local consistencies that cannot be extended globally. Abramsky and Brandenburger \cite{AbramskyBrandenburger2011} formalized this insight, showing that \emph{quantum contextuality} corresponds precisely to obstructions in sheaf cohomology.
 By extending semantic spaces to complex Hilbert spaces and semantic channels to quantum operations \cite{Nielsen2010}, we gain access to this rich structure. The resulting quantum framework provides both a resource for alignment and a diagnostic tool: entanglement reduces required communication, while sheaf cohomology characterizes irreducible obstructions.

\vspace{-2mm}\subsection{Related Work}\vspace{-1mm}
Despite progress in deep learning for semantic compression \cite{bourtsoulatze2019deep,XieTSP2021,Gunduz2022,wu2024deep}, the information-theoretic foundations of SC remain incomplete. Semantic information theory, from Carnap and Bar-Hillel \cite{Carnap52} to Floridi \cite{Floridi2011PhilosophyOfInformation}, addresses what semantic content is. A recent work \cite{Lu2025} generalizes Shannon's rate-distortion theory by replacing distortion constraints with semantic constraints via truth functions. But these works \cite{Floridi2011PhilosophyOfInformation,Lu2025} do not address how to align heterogeneous semantic interpretations across users. The missing piece is a mathematics for \emph{semantic consistency} that characterizes when locally coherent agents can achieve global alignment, and quantifying obstructions when they cannot. \emph{Sheaf theory}, recently applied to multi-agent networks \cite{Hansen2019SpectralSheaves,grimaldi2025learning,ghalkha2025sheafalign}, formalizes semantic consistency through global sections representing alignment and cohomology measuring irreducible ambiguity.  The sheaf-theoretic approach to quantum contextuality \cite{AbramskyBrandenburger2011} was further developed by \cite{Abramsky2012CohomologyNonlocality} and \cite{Caru2017CohomologyContextuality} using \emph{sheaf cohomology}, an algebraic topological tool that rigorously measures obstructions to extending local consistency to global agreement.
 Entanglement-assisted classical capacity was established by \cite{Bennett2002EntanglementAssisted}; quantum SC was proposed by \cite{Chehimi2024QuantumSemantic}. Unlike prior work that assumes shared semantics or proposes SC protocols without capacity theorems, we establish fundamental limits.

\vspace{-2mm}\subsection{Our  Contributions}\vspace{-1mm}
The main contribution of this paper is a rigorous framework proving that sheaf cohomology precisely characterizes irreducible semantic ambiguity, and that shared entanglement reduces the communication rate required to align heterogeneous agents. Our key contributions include:
\begin{itemize}
    \item \emph{Cohomological semantic rate (Theorem 1):} We prove that the minimum communication rate for perfect semantic alignment equals $\dim H^1$, the dimension of the first sheaf cohomology group.  Intuitively, $\dim H^1$
counts the number of independent semantic ambiguities that cannot be resolved by local processing alone.  This provides a semantic analog of Shannon's channel coding theorem \cite{Shannon1948}.
    
    \item \emph{Entanglement-assisted capacity (Theorem 2):} We derive semantic channel capacity when sender and receiver share prior entanglement, proving it strictly exceeds classical capacity. Entanglement provides a physical realization of the ``shared context" often assumed in classical SC, with each shared entangled pair saving one bit of alignment communication. The quantum advantage grows as channel noise increases, which is precisely when SC most benefits over bit-level transmission \cite{ChaccourArxiv2022}.
    
    \item \emph{Contextuality as resource (Theorem 3):} We show that quantum contextuality reduces cohomological obstructions to semantic alignment. Contextual correlations act as ``pre-shared semantic resolution,'' establishing contextuality as a resource for SC.
    
    \item \emph{Discord-integration duality (Theorem 4):} We prove that quantum discord, which measures quantum correlations beyond entanglement, equals integrated semantic information. This links quantum correlations to irreducible semantic content and connects our framework to integrated information theory \cite{Tononi2016}.
\end{itemize}

\vspace{-1mm}\section{Modeling Semantic Networks as Quantum Sheaves}
\vspace{-1mm}
This section establishes a mathematical framework for quantum SC by extending network sheaves from classical vector spaces to quantum Hilbert spaces, thereby enabling the use of tools from algebraic topology and quantum information theory \cite{Nielsen2010}.

\vspace{-2mm}\subsection{Semantic Networks and Classical Sheaves}\vspace{-1mm}
Consider a network of communicating agents represented by a graph $G = (V, E)$, where vertices $V$ correspond to agents and edges $E$ represent communication links. Each agent $v\in V$ maintains an internal semantic representation, which is a structured encoding of meanings, concepts, and their relationships \cite{ChristoTWCArxiv2022}. When agents communicate, they must align these heterogeneous representations to achieve mutual understanding.
A \emph{network sheaf} over $G$ assigns a vector space to each vertex and a linear map to each edge, capturing how semantic information transforms as it passes between agents. Formally, a classical network sheaf consists of:
\vspace{-1mm}\begin{itemize}
    \item A \emph{stalk space} $\mathcal{F}_v$ at each vertex $v$, representing the agent's semantic state space, encompassing the set of semantic concepts.
    \item A \emph{restriction map} $F_e: \mathcal{F}_u \to \mathcal{F}_v$ for each edge $e = (u, v)$, representing how agent $u$'s semantics project onto the communication channel to agent $v$.
\vspace{-1mm}\end{itemize} The transformation occurs because agents may use different coordinate systems, communication may compress or project information, or semantic ``translation'' between representations is required. A \emph{local section} over a subset $U \subseteq V$ is an assignment of semantic states that is consistent along edges: for every edge between vertices in $U$, the restriction maps agree. A \emph{global section} extends this consistency to the entire network, representing perfect semantic alignment among all agents.
The key insight from sheaf theory is that global sections may fail to exist even when local sections are everywhere consistent, a topological phenomenon we formalize in Section~\ref{Obstruction}. However, classical sheaves have fundamental limitations: they cannot represent agents holding probabilistic mixtures over meanings, nor capture correlations shared between agents prior to communication. These considerations motivate extending the framework to quantum Hilbert spaces.

\vspace{-1mm}\subsection{Quantum Semantic Sheaves}\vspace{-1mm}
We extend the classical framework to the quantum setting, motivated by two observations. First, semantic representations in modern AI systems such as neural embeddings, attention weights, latent spaces are naturally described by vectors in high-dimensional spaces, and quantum mechanics provides a canonical framework for such spaces. Second, quantum systems exhibit contextuality, which Abramsky and Brandenburger showed corresponds precisely to sheaf-theoretic obstruction~\cite{AbramskyBrandenburger2011} .

\vspace{-2mm}\begin{definition}
A \emph{quantum semantic sheaf} over a communication graph
$G = (V,E)$ is a tuple $(\mathcal{H}, \mathcal{F}, \rho)$ where:
\begin{enumerate}[(i)]
    \item $\mathcal{H}_v$ is a finite-dimensional Hilbert space at each vertex
    $v$, representing agent $v$'s semantic state space.
    
    \item $\mathcal{F}_e : \mathcal{L}(\mathcal{H}_u) \to \mathcal{L}(\mathcal{H}_v)$
    is a completely positive trace-preserving (CPTP) map for each edge
    $e \!=\! (u,v)$, representing the semantic channel from agent $u$ to  $v$.
    
    \item $\rho_v$ is a density operator on $\mathcal{H}_v$ representing
    agent $v$'s current semantic state. A density operator is a positive semi-definite matrix with $\mathrm{Tr}(\rho) = 1$
, generalizing probability distributions to quantum systems \cite{Nielsen2010}.
\vspace{-1mm}\end{enumerate}
\end{definition}

The CPTP maps $F_e$ serve a dual role: they model both the representational transformation between agent $u$'s local meaning space and the shared comparison space, and the physical quantum channel through which semantic states are communicated. Channel noise, loss, and decoherence are encoded directly in the sheaf structure. The density operator $\rho_v$ encodes agent $v$'s probabilistic belief distribution over semantic concepts, with diagonal elements representing probabilities of discrete meanings and off-diagonal elements capturing quantum coherences between superposed interpretations. This generalizes classical network sheaves: when states are pure ($\rho_v = |\psi\rangle\langle\psi|$) and channels are unitary ($F_e(\rho) = U_e \rho U_e^\dagger$), we recover the classical setting. The quantum generalization admits two phenomena absent classically: shared entanglement as a resource, and contextuality-induced obstructions. A \emph{semantic configuration} is an assignment $\sigma = \{\sigma_v\}_{v \in V}$ of density operators $\sigma_v \in \mathcal{D}(\mathcal{H}_v)$ to each vertex. Two agents $u$ and $v$ connected by edge $e$ are \emph{semantically aligned} if:$
F_e(\sigma_u) = \sigma_v.$
\vspace{-2mm}\begin{definition}
A \emph{local section} over a subset of agents $U \subseteq V$ is a semantic configuration
$\sigma$ such that $
\mathcal{F}_e(\sigma_u) = \sigma_v$, 
for all edges $e = (u,v)$ with both endpoints in $U$.
A \emph{global section} is a local section over the entire vertex set $V$.
\vspace{-1mm}\end{definition}
When a global section exists, the network achieves perfect semantic
alignment. This means that all agents share a consistent semantic state that transforms
coherently along every communication link. The central question we address here is: \emph{when does
such alignment exist, and what communication resources are required to achieve it?} To answer this, we first need tools to measure how far a configuration is from perfect alignment, and to characterize obstructions when perfect alignment is impossible.

\vspace{-1mm}\subsection{The Sheaf Laplacian}\vspace{-1mm}

To quantify deviation from semantic alignment, we employ the \emph{sheaf Laplacian}~\cite{Hansen2019SpectralSheaves}, which generalizes the graph Laplacian and measures how far a semantic configuration deviates from being a global section.
First, we define
the coboundary operator $\delta^0$ that maps vertex assignments to edge disagreements. For a semantic configuration $\sigma$, define
\begin{equation}
    (\delta^0 \sigma)_e = \mathcal{F}_e(\sigma_u) - \sigma_v,
    \label{eq_coboundary}
\end{equation}
for each edge $e = (u,v)$. The configuration $\sigma$ is a global section if and only if $\delta^0 \sigma = 0$.
\vspace{-2mm}\begin{definition}
The \emph{sheaf Laplacian} is defined as
\begin{equation}
\mathcal{L} = (\delta^0)^\dagger \delta^0,
\vspace{-1mm}\end{equation}
where $\dagger$ denotes the {adjoint operator} with respect to the Hilbert--Schmidt inner product $\langle A, B \rangle_{\text{HS}} = \text{Tr}(A^\dagger B)$. The adjoint $(\delta^0)^\dagger$ is the unique linear operator satisfying the duality condition:
$
\langle (\delta^0)^\dagger x, y \rangle_{\text{HS}} = \langle x, \delta^0 y \rangle_{\text{HS}},
$
for all semantic configurations $x, y$. The operator $\delta^0$ maps vertex assignments to edge disagreements; $(\delta^0)^\dagger$ reverses this flow by aggregating edge errors back to vertices. The kernel of $\mathcal{L}$ consists precisely of the global sections:
\vspace{-1mm}\begin{equation}
\ker(\mathcal{L}) = \{\sigma : F_e(\sigma_u) = \sigma_v \text{ for all edges } e = (u,v)\}.
\vspace{-1mm}\end{equation}
\end{definition}

The eigenvalues of $\mathcal{L}$ quantify the ``cost'' of semantic misalignment. The smallest nonzero eigenvalue, known as the \emph{spectral gap}, determines how quickly local alignment protocols converge to global consensus.
\vspace{-1mm}\subsection{ Cohomological Obstruction}\vspace{-1mm}
\label{Obstruction}

To formally characterize irreducible semantic obstructions, we construct the \emph{\Cech{} complex} \cite{Hatcher2002} adapted to our quantum setting:
\vspace{-1mm}\begin{equation}
      C^{0}(G,S) \xrightarrow{\ \delta^{0}\ } C^{1}(G,S) \xrightarrow{\ \delta^{1}\ } C^{2}(G,S) \xrightarrow{\ \delta^{2}\ } \cdots,
\vspace{-1mm}\end{equation}
where \(C^{0}\) consists of vertex assignments that are semantic states, \(C^{1}\) consists of edge
assignments that are semantic disagreements along communication links, and the coboundary maps \(\delta^{k}\) measure inconsistency at each hierarchical level (shared semantics among $k+1$ set of agents).   
 Cohomology groups are further defined as the quotient of cycles (elements in the kernel) by boundaries (elements in the image), capturing obstructions that cannot be removed by local adjustments. For semantic alignment:
\vspace{-2mm}\begin{align}
H^0(G, \mathcal{S}) &= \ker(\delta^0) & &\text{(global sections)}, \\
H^1(G, \mathcal{S}) &= \ker(\delta^1) / \mathrm{im}(\delta^0) & &\text{(obstruction space)}.
\vspace{-2mm}\end{align}
The zeroth cohomology $H^0$ captures what global alignments exist. The first cohomology $H^1$ captures irreducible obstructions, which are semantic ambiguities that cannot be resolved by any local processing. A nonzero element of $H^1$ represents a configuration that is locally consistent on every edge but cannot be extended to a global section. To illustrate this, consider three agents $A, B, C$ in a loop, each maintaining different semantic reference frames. A 1-cocycle assigns frame disagreements to edges: $(d_{AB}, d_{BC}, d_{CA})$. If these satisfy $d_{AB} + d_{BC} + d_{CA} = 0$ (loop consistency), they form a cycle. However, if no global choice of reference frames can simultaneously satisfy all edges, then this cycle represents a topological incompatibility, i.e., an element of $H^1 \neq 0$. This obstruction cannot be removed by local communication; it reflects fundamental misalignment in the network structure.
\begin{proposition}
A quantum semantic sheaf admits a global section for every locally consistent configuration if and only if $H^1(G, \mathcal{S}) = 0$.
\end{proposition}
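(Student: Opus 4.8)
The plan is to treat this as the basic obstruction-theoretic fact behind \Cech{} cohomology and reduce it to the definition $H^1(G,\mathcal{S}) = \ker(\delta^1)/\im(\delta^0)$: the obstruction to turning a locally consistent configuration into a global one is precisely the class it represents in $H^1$. I read ``locally consistent configuration'' as a family of local sections over the patches of a cover of $V$ (open stars of vertices, say), each patch-wise consistent but with no agreement presumed on overlaps; equivalently, on the cochain level, its datum is the $1$-cochain of pairwise disagreements, and its ``loop consistency'' is the statement that this cochain lies in $\ker\delta^1$.

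($\Leftarrow$) Suppose $H^1(G,\mathcal{S}) = 0$. From a locally consistent family $\{\sigma^{(i)}\}$ form the discrepancy cochain $c$ with $c_{ij} = \sigma^{(j)}-\sigma^{(i)}$ on overlaps. A one-line telescoping identity gives $\delta^1 c = 0$ — this is exactly the loop consistency $d_{AB}+d_{BC}+d_{CA}=0$ of the three-agent example — so $c\in\ker\delta^1=\im\delta^0$. Choosing $\tau$ with $\delta^0\tau=c$ and replacing each $\sigma^{(i)}$ by $\sigma^{(i)}+\tau$ makes all discrepancies vanish; the corrected local data now agree on overlaps and therefore glue to a configuration $\sigma$ with $\delta^0\sigma=0$, i.e.\ a global section.

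($\Rightarrow$) Conversely, if $H^1\neq 0$ choose $c\in\ker\delta^1\setminus\im\delta^0$. Then $c$ is a locally consistent configuration of disagreements (it satisfies the loop condition on every triangle), yet by definition no vertex assignment $\sigma$ satisfies $\delta^0\sigma=c$; hence these disagreements cannot be absorbed into any global choice of semantic states, and no global section compatible with the configuration exists. Contrapositively, if every locally consistent configuration admits a global section then $\ker\delta^1\subseteq\im\delta^0$, so $H^1=0$.

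The step I expect to be the real obstacle is not the linear bookkeeping above but the genuinely quantum point: the complex $C^\bullet(G,\mathcal{S})$ is built from the \emph{linear} spaces $\mathcal{L}(\mathcal{H}_v)$, whereas a bona fide semantic configuration is a tuple of density operators (trace one, positive semidefinite). One must therefore check that the correction by $\tau$ — and the realization of $c$ in the converse — can be carried out inside the density-operator cone, not merely in its linear span. The clean route is to prove the equivalence first at the level of the linear sheaf $\mathcal{S}$, and then argue that when the original local data are states and $\tau$ is traceless of sufficiently small norm the corrected tuple stays in $\mathcal{D}(\mathcal{H}_v)$; the fully general positivity statement is where the quantum case departs from the classical one, and I would isolate it as a separate lemma rather than entangle it with the cohomological argument.
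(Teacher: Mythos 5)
Your argument is correct at the level at which the paper itself operates, and it is in essence the proof the paper gestures at but never writes down: the paper's proof is a one-line appeal to the ``standard characterization of sheaf cohomology'' (citing Abramsky et al.) plus the remark that CPTP maps preserve the needed algebraic structure, whereas you actually execute that standard \Cech{} argument. Your reading of ``locally consistent configuration'' as a $1$-cochain of edge disagreements lying in $\ker\delta^1$ is exactly the reading the paper adopts in Appendix A (where the alignment problem is: given $\omega\in\ker\delta^1$, find $\sigma$ with $\delta^0\sigma=\omega$), so both directions of your equivalence line up with the paper's definitions, and under that reading the statement is close to tautological — which your proof makes visible, arguably a virtue.

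Two caveats. First, your telescoping identity $\delta^1 c=0$ treats discrepancies as plain differences $\sigma^{(j)}-\sigma^{(i)}$, but in this sheaf the discrepancy on an edge is $\mathcal{F}_e(\sigma_u)-\sigma_v$, so the cocycle condition around a loop involves the restriction maps (holonomy), not a bare sum; the paper's own three-agent example is equally loose, so this is consistent with its level of rigor, but a careful writeup should carry the $\mathcal{F}_e$'s through $\delta^1$. Second, the positivity/trace issue you flag is the genuinely nontrivial point, and you identify it more honestly than the paper does — but you do not close it. The trace part is actually automatic (since $\mathcal{F}_e$ is trace-preserving and the $\sigma_v$ are states, every discrepancy is traceless, so the coboundary equation can be solved in the traceless subspace), whereas positivity is not: your ``$\tau$ of sufficiently small norm'' argument only works when the required correction is perturbative (e.g.\ when the local states are full rank and the discrepancies small), and in general the corrected tuple can leave $\mathcal{D}(\mathcal{H}_v)$. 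So as written your proof establishes the proposition for the linearized sheaf $\mathcal{L}(\mathcal{H}_v)$, with the density-operator version resting on an unproven (and, without extra hypotheses, false-in-general) lemma; the paper's cited one-liner silently has the same gap.
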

\begin{IEEEproof}
    This follows from the standard characterization of sheaf cohomology \cite{Abramsky2012CohomologyNonlocality}, which extends to our quantum setting since CPTP maps preserve the algebraic structure required for the Čech complex construction.
\end{IEEEproof}

\subsection{ Entanglement-Assisted Semantic Sheaves}

We now incorporate entanglement as a resource for SC. 

\begin{definition}
An \emph{entanglement-assisted quantum semantic sheaf} augments Definition~1 with a bipartite entangled state $\Phi_e \in \mathcal{D}(\mathcal{H}'_u \otimes \mathcal{H}'_v)$ for each edge $e = (u, v)$, where $\mathcal{H}'_u$ and $\mathcal{H}'_v$ are local ancilla spaces held by agents $u$ and $v$ respectively. The state $\Phi_e$ is \emph{maximally entangled} if it achieves maximum entanglement entropy of $\log_2 d$ ebits for $d$-dimensional systems (e.g., Bell states for qubits).
\end{definition}
The semantic channel $F_e$ is now \emph{assisted}: agent $u$ can perform joint operations on $\mathcal{H}_u \otimes \mathcal{H}'_u$ before transmission, and agent $v$ can perform joint operations on the received state together with $\mathcal{H}'_v$.


\begin{proposition}
For an entanglement-assisted semantic sheaf with maximally entangled states on each edge:
\vspace{-1mm}\begin{equation}
  \dim H^{1}_{\mathrm{EA}}(G,S) \;\le\; \dim H^{1}(G,S),
\vspace{-1mm}\end{equation}
with strict inequality for generic sheaves over graphs with cycles.
\end{proposition}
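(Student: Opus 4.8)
\begin{IEEEproof}[Proof sketch]
The plan is to compare the entanglement-assisted \Cech{} complex with the unassisted one and to show that augmenting each edge with a maximally entangled state only enlarges the image of the zeroth coboundary, so that $H^1_{\mathrm{EA}}$ is a quotient of $H^1$.

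\emph{Step 1 (reduction to the coboundary level).} First I would take the cochain spaces of the assisted sheaf to be built from the same stalks $\mathcal{L}(\mathcal{H}_v)$ as in the unassisted case, so that only the maps change. Since the bipartite resources $\Phi_e$ are attached to edges and are consumed precisely by the alignment operation $F_e(\sigma_u)=\sigma_v$, they enter only the degree-zero coboundary $\delta^0$; the higher coboundary maps $\delta^k$ ($k\ge 1$) are fixed by the incidence combinatorics of $G$ together with the bare channels transporting disagreements around higher simplices, hence are unchanged (and vanish identically if $G$ is treated as a $1$-complex). Therefore $\ker\delta^1_{\mathrm{EA}}=\ker\delta^1$, and the whole statement reduces to comparing $\im\delta^0$ with $\im\delta^0_{\mathrm{EA}}$ inside the common cocycle space $\ker\delta^1$.

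\emph{Step 2 (the inequality).} Next I would observe that $\im\delta^0\subseteq\im\delta^0_{\mathrm{EA}}$: an assisted edge map can always simulate its bare counterpart by keeping the ancillas in a fixed product state and acting trivially on them, so every unassisted coboundary is also an assisted one. As both images sit inside $\ker\delta^1$, the third isomorphism theorem gives
\[
H^1_{\mathrm{EA}}(G,S)\;=\;\ker\delta^1\big/\im\delta^0_{\mathrm{EA}}\;\cong\;H^1(G,S)\big/\bigl(\im\delta^0_{\mathrm{EA}}/\im\delta^0\bigr),
\]
whence $\dim H^1_{\mathrm{EA}}=\dim H^1-\dim\bigl(\im\delta^0_{\mathrm{EA}}/\im\delta^0\bigr)\le\dim H^1$.

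\emph{Step 3 (strictness on graphs with cycles).} For the strict inequality I would characterize the extra coboundaries produced by maximal entanglement. A teleportation / remote-state-preparation argument shows that a maximally entangled $\Phi_e$ of Schmidt rank $\dim\mathcal{H}'_v$ lets the assisted edge map realize not just $F_e$ but a family $\Lambda_v\circ F_e\circ\Lambda_u$ of ``locally corrected'' restriction maps, with $(\Lambda_u,\Lambda_v)$ ranging over a set that becomes full-dimensional in the maximal-entanglement limit. Fixing an independent cycle of $G$, the corresponding class of $H^1$ is exactly the holonomy obstruction of the bare restriction maps around that cycle (cf.\ the loop example preceding Proposition~1); once the cycle edges can absorb arbitrary such local twists this holonomy is trivialized, so that direction of $\ker\delta^1$ enters $\im\delta^0_{\mathrm{EA}}$ while generically staying outside $\im\delta^0$. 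Hence $\im\delta^0_{\mathrm{EA}}\supsetneq\im\delta^0$ for generic sheaves over graphs containing a cycle, which gives the strict inequality; on trees both cohomologies already vanish.

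\emph{Main obstacle.} The hard part will be Step~3: pinning down exactly which corrected restriction maps a maximally entangled resource enables, and proving the genericity claim---that the added coboundary directions are transverse to $\im\delta^0$ and actually lower the holonomy obstruction rather than falling back inside $\im\delta^0$. This needs a careful dimension count for $\im\delta^0_{\mathrm{EA}}$ (reflecting that maximal entanglement of local dimension $d_e$ supplies $\log_2 d_e$ ``correctable'' ebits per edge, consistent with the one-ebit-per-bit accounting of Theorem~2), together with a Zariski-genericity argument ruling out accidental coincidences among the bare channels that would collapse this count.
\end{IEEEproof}
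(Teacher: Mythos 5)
Your proposal is correct in substance but reaches the result by a different route than the paper. The paper's own proof is a two-sentence argument entirely in quantum-information terms: maximally entangled states enable superdense coding, effectively doubling the classical capacity of each edge, and this extra capacity is said to eliminate the cocycles in $H^1$ arising from holonomy obstructions around loops (the same Schmidt-rank accounting reappears in the Appendix B proof of Theorem~2). You instead split the claim into an unconditional monotonicity statement and a genericity statement: Step~1--2 fix the cocycle space $\ker\delta^1$, show $\im\delta^0\subseteq\im\delta^0_{\mathrm{EA}}$ by having the assisted edge maps ignore their ancillas, and obtain $\dim H^1_{\mathrm{EA}}\le\dim H^1$ as a quotient-space dimension count --- an argument the paper never makes explicit and which is cleaner and strictly more rigorous than the capacity heuristic; Step~3 then attributes strictness to teleportation-style locally corrected restriction maps $\Lambda_v\circ F_e\circ\Lambda_u$ trivializing the holonomy class of a cycle, where the paper invokes superdense coding instead. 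Both mechanisms converge on the same intuition (entanglement kills loop holonomy), and your flagged ``main obstacle'' --- the dimension count and genericity argument for the enlarged coboundary image --- is exactly the part the paper also leaves unproved, so you are no less complete than the original. Two small caveats: the paper does not commit to your Step~1 modeling choice (same stalks, only $\delta^0$ modified), so you should state it as a definition of $H^1_{\mathrm{EA}}$ and check it reproduces the Schmidt-rank formula of Theorem~2; and for the quotient argument you need $\im\delta^0_{\mathrm{EA}}$ to be a linear subspace, whereas the family of teleportation-corrected maps is a nonlinear set, so you should pass to its span (which only helps the inequality but must be handled in the genericity count for strictness).
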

\vspace{-2mm}\begin{IEEEproof}
    Maximally entangled states enable superdense coding, effectively doubling the classical capacity of each edge. This additional capacity can eliminate cocycles in $H^1$
that arise from holonomy obstructions around loops. 
\end{IEEEproof}

The intuition is that entanglement provides a ``pre-shared reference frame'' that resolves certain semantic ambiguities without explicit communication.  Our framework assumes pre-shared entanglement without accounting for its generation and distribution cost. Quantifying the net advantage after entanglement overhead remains an important open problem.

\vspace{-1mm}\subsection{ Connection to Contextuality}\vspace{-1mm}
Our framework connects directly to quantum contextuality~\cite{AbramskyBrandenburger2011}. In \cite{AbramskyBrandenburger2011}, a measurement scenario defines a sheaf over a simplicial complex of compatible measurements, and contextuality corresponds to the nonexistence of global sections.
\vspace{-5mm}\begin{proposition}
Let $\mathcal{M}$ be a quantum measurement scenario with sheaf $\mathcal{S}_\mathcal{M}$ whose vertices correspond to measurement contexts and whose stalks assign outcome probabilities. There exists a quantum semantic sheaf $\mathcal{S}$ such that $H^1(\mathcal{S}) \cong H^1(\mathcal{S}_\mathcal{M})$. Contextual measurement scenarios correspond precisely to semantic sheaves with irreducible ambiguity.
\vspace{-2mm}\end{proposition}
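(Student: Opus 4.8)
The plan is to realize the measurement scenario $\mathcal{M}$ explicitly inside the quantum semantic framework and then to compare \Cech{} complexes. Recall the Abramsky--Brandenburger data: a finite measurement set $X$, a cover $\mathcal{U} = \{C_i\}_i$ of $X$ by maximal jointly measurable contexts, an outcome set $O$, and an empirical model $e = \{e_C\}$ assigning each context a distribution on joint outcomes $O^C$, with marginals agreeing on overlaps. First I would take the communication graph $G$ to be (the $1$-skeleton of) the nerve of $\mathcal{U}$: vertices are the contexts $C_i$, and -- feeding the higher groups $C^{k}(G,\mathcal{S})$ of Section~\ref{Obstruction} -- the cells are the nonempty intersections $C_{i_0} \cap \dots \cap C_{i_k}$; to remain literally within Definition~1 one may equivalently use the incidence graph whose vertices are the contexts together with their nonempty overlaps and whose directed edges are the inclusions $\tau' \subseteq \tau$. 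In either presentation the combinatorial skeleton of our complex is, by construction, that of $\mathcal{S}_{\mathcal{M}}$.

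Next I would populate the sheaf. To each cell $\tau$ attach the Hilbert space $\mathcal{H}_\tau = \mathbb{C}^{O^\tau}$ with distinguished ``outcome basis'' $\{|s\rangle : s \in O^\tau\}$, encode the empirical model at the top cells by the diagonal density operators $\rho_C = \sum_{s \in O^C} e_C(s)\,|s\rangle\langle s|$, and take each restriction map to be the CPTP marginalisation channel $\mathcal{F}_{\tau \to \tau'} : \mathcal{L}(\mathcal{H}_\tau) \to \mathcal{L}(\mathcal{H}_{\tau'})$ given by the partial trace onto the retained coordinates. Every such map is manifestly completely positive and trace preserving, so $(\mathcal{H}, \mathcal{F}, \rho)$ is a bona fide quantum semantic sheaf whose classical sector -- diagonal configurations together with the induced stochastic marginalisations -- is exactly the classical measurement-scenario sheaf of \cite{AbramskyBrandenburger2011}. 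Consequently $\{\rho_C\}$ glues to a global section iff its marginals agree everywhere, i.e.\ iff $e$ is no-signalling, which fixes $H^0$.

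The core step is then to identify $H^{1}$. On the classical sector the quantum \Cech{} complex coincides, term by term and coboundary by coboundary, with the Abramsky--Brandenburger complex, equipped with the $\mathbb{Z}$- or $\mathbb{Z}_2$-linear coefficient structure and the relative/base-context formulation that \cite{Abramsky2012CohomologyNonlocality} use to \emph{witness} contextuality; hence its first cohomology is precisely $H^{1}(\mathcal{S}_{\mathcal{M}})$. After arguing that the additional (off-diagonal, coherence) operator degrees of freedom in the quantum cochain groups do not alter this group -- the natural device being to pass throughout to the subcomplex of diagonal Hermitian configurations and to check that this inclusion induces an isomorphism on $H^1$ -- one obtains $H^{1}(\mathcal{S}) \cong H^{1}(\mathcal{S}_{\mathcal{M}})$. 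The ``precisely'' clause is then immediate: by Proposition~1, $\mathcal{S}$ carries irreducible semantic ambiguity iff $H^{1}(\mathcal{S}) \neq 0$; the isomorphism transports this to $H^{1}(\mathcal{S}_{\mathcal{M}}) \neq 0$; and the Abramsky--Brandenburger / Abramsky--Mansfield--Barbosa characterization identifies the latter with contextuality of $\mathcal{M}$.

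I expect the main obstacle to be this final $H^1$ comparison, on two fronts. First, the \Cech{} complex of Section~\ref{Obstruction} is phrased over a graph and its implicit higher simplices, whereas the Abramsky--Brandenburger complex is that of an arbitrary cover; the cleanest route is to restrict to scenarios whose cover nerve is at most $1$-dimensional -- which already includes the canonical witnesses (CHSH/PR-box, Hardy, KCBS, $n$-cycle) -- or else to fix the definition of the higher $C^{k}(G,\mathcal{S})$ as the \Cech{} cochains of the nerve and verify the marginalisation channels intertwine the whole coboundary tower. Second, the cohomology that detects contextuality is not that of the bare distribution presheaf -- which is flasque, hence acyclic -- but a relative group built from the $\mathbb{Z}$- or $\mathbb{Z}_2$-linear version of $\mathcal{S}_{\mathcal{M}}$; the quantum lift must therefore be set up at the level of that relative complex, and one must confirm that the extra operator degrees of freedom introduced by Definition~1 (density operators rather than probability vectors) do not enlarge the resulting $H^1$. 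The remaining ingredients -- complete positivity of the constructed channels, the global-section/no-signalling bijection, and the term-by-term matching of the classical sectors -- are mechanical.
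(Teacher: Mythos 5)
The paper itself offers no proof of this proposition (it is stated and then discussed, with the burden carried by the citation to Abramsky--Brandenburger), so there is no in-paper argument to compare against; your nerve-of-contexts construction with stalks $\mathbb{C}^{O^\tau}$, diagonal density operators $\rho_C=\sum_s e_C(s)|s\rangle\langle s|$, and partial-trace marginalisation channels is exactly the natural realization the paper implicitly has in mind, and it is laid out more carefully than anything in the paper. However, the two points you flag at the end are not side issues -- they are the entire content of the proposition, and your proposal does not close either of them. First, the claim $H^1(\mathcal{S})\cong H^1(\mathcal{S}_{\mathcal{M}})$ rests on the assertion that the inclusion of the diagonal (classical) subcomplex into the full operator-valued \Cech{} complex induces an isomorphism on $H^1$. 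Since the marginalisation channels preserve the splitting of each $\mathcal{L}(\mathcal{H}_\tau)$ into diagonal and off-diagonal parts, the quantum complex decomposes as a direct sum, and $H^1$ of the full complex acquires an extra summand from the coherence (off-diagonal) subcomplex; you give no argument that this summand vanishes, and for a generic cover nerve there is no a priori reason it should. Without that, you have only an injection of the classical $H^1$ into the quantum one, not the claimed isomorphism.

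Second, the ``precisely'' clause is carried in your argument by the statement that the Abramsky--Brandenburger / Abramsky--Mansfield--Barbosa cohomology \emph{characterizes} contextuality, i.e.\ that $H^1(\mathcal{S}_{\mathcal{M}})\neq 0$ iff $\mathcal{M}$ is contextual. That is not what those papers prove: the (relative, $\mathbb{Z}$- or $\mathbb{Z}_2$-linearized) cohomological obstruction is a \emph{sufficient} witness of contextuality, and it is known to produce false negatives -- there are contextual empirical models whose cohomological obstruction vanishes; the question of when cohomology fully detects contextuality is precisely the subject of the Car\`u reference the paper itself cites. So the final equivalence, as you state it, would fail in general; at best it holds for restricted classes of scenarios (e.g.\ those where all-vs-nothing or similar arguments make the witness complete), and your proof would need either to restrict the proposition's scope accordingly or to replace the linear-coefficient $H^1$ by an invariant (such as non-existence of global sections of the support presheaf) that genuinely coincides with contextuality -- but then you would have to show that the paper's $H^1(G,\mathcal{S})$ computes \emph{that} invariant, which is again not automatic since the paper's cochain groups are vector spaces with linear coboundaries. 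In short: the construction is the right skeleton, but the two comparison steps you defer are exactly where the proposition could fail as stated, so the proposal as written has a genuine gap.
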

This correspondence reveals that quantum contextuality, a fundamental feature distinguishing quantum from classical physics manifests as semantic ambiguity in multi-agent communication. Contextual correlations represent semantic content that transcends classical description.

\vspace{-1mm}\section{Main Results: Semantic Rate and Entanglement-Assisted Capacity}
Having established the mathematical framework, we now prove our main results characterizing the fundamental limits of quantum SC.

\vspace{-2mm}\subsection{Cohomological Semantic Rate}\vspace{-1mm}

Our first main result establishes that cohomology determines the fundamental rate limit for semantic alignment.

\vspace{-2mm}\begin{theorem}
\label{theorem_rate}
For a quantum semantic sheaf $\mathcal{S}$ over graph $G$, the minimum communication rate required for perfect semantic alignment is:
\vspace{-1mm}\begin{equation}
R^*_{\mathrm{sem}} = \log_2 \dim H^1(G, \mathcal{S})
\vspace{-1mm}\end{equation}
measured in qubits (or bits for classical sheaves).
\end{theorem}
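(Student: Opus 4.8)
The plan is to establish the rate formula by a matching converse-and-achievability argument, treating semantic alignment as a coding problem in which the resource being expended is classical (or quantum) communication on top of the free local CPTP operations already present in the sheaf. First I would set up the operational model precisely: a protocol consists of rounds in which agents apply local CPTP maps to their stalks and ancillas, and exchange messages over side channels whose total size (summed over the protocol, normalized per use) defines the rate $R$; the protocol succeeds if the induced configuration is a global section, i.e.\ lies in $\ker(\mathcal{L}) = \ker(\delta^0)$. The key structural fact I would invoke is the Hodge-type decomposition of the cochain complex with respect to the Hilbert–Schmidt inner product: $C^1 = \operatorname{im}(\delta^0) \oplus \mathcal{H}^1 \oplus \operatorname{im}((\delta^1)^\dagger)$, where the harmonic space $\mathcal{H}^1 = \ker(\delta^1) \cap \ker((\delta^0)^\dagger)$ is isomorphic to $H^1(G,\mathcal{S})$. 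A locally consistent configuration has coboundary defect lying in $\ker(\delta^1)$, and the component of this defect in $\mathcal{H}^1$ is exactly the part that no composition of local restriction-map adjustments (i.e.\ no element of $\operatorname{im}(\delta^0)$) can cancel.

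For the converse (rate $\geq \log_2 \dim H^1$), I would argue that resolving the harmonic component requires communicating enough information to specify which element of $H^1$ the configuration's obstruction class represents. Concretely, consider the family of locally consistent configurations whose obstruction classes range over a basis-indexed set of $\dim H^1$ distinguishable elements of the obstruction space; aligning all of them with a protocol of rate $R$ would, by a simulation/relabeling argument, let the receiver distinguish $\dim H^1$ messages, forcing $2^R \geq \dim H^1$. I would make this rigorous via a cut-set or Holevo-type bound: the mutual information (or Holevo quantity) that must flow across the network to move the global defect from a nonzero harmonic representative to zero is at least $\log_2$ of the number of orthogonal such representatives, which is $\dim H^1$. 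For achievability, I would exhibit a protocol: pick a linear section $s: H^1 \to C^1$ (a choice of harmonic representatives), have a designated coordinator compute the obstruction class of the current configuration — this is a local computation since the $\delta^k$ are fixed linear maps — encode it in $\lceil \log_2 \dim H^1 \rceil$ qubits, broadcast it, and have agents apply the corresponding $\operatorname{im}(\delta^0)$-correction (a coordinated local unitary/CPTP adjustment) that cancels the coboundary-exact part, after which the residual is harmonic and is subtracted using the broadcast label; Proposition 1 guarantees that once the $H^1$ obstruction is accounted for, a global section exists and is reachable by local maps.

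The main obstacle I anticipate is making the converse airtight against arbitrary adaptive, entanglement-free protocols: one must rule out that clever multi-round local processing could "dilute" the harmonic obstruction below the naive counting bound, and this requires a genuinely information-theoretic invariant — I would use the monotonicity of the Holevo information (or of a suitable sheaf-cohomological rank) under local CPTP maps, showing that $\dim H^1$ is a protocol-invariant lower bound on the cumulative communication, analogous to how Shannon's converse uses data-processing. A secondary subtlety is the $\log_2 \dim$ versus $\log_2(1 + \dim)$ gap and the treatment of $H^1 = 0$ (rate zero, consistent with Proposition 1), plus confirming that in the classical-sheaf specialization the quantum messages collapse to classical bits so the formula reads in bits; these I would handle as remarks. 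Throughout I would lean on the fact, already asserted in the excerpt, that CPTP maps preserve the algebraic structure of the \Cech{} complex, so the Hodge decomposition and the identification $\mathcal{H}^1 \cong H^1$ go through verbatim from the classical case.
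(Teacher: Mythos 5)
Your overall architecture (achievability plus converse, with the decisive observation that only the cohomology class $[\omega]\in H^1=\ker(\delta^1)/\mathrm{im}(\delta^0)$ matters because the $\mathrm{im}(\delta^0)$ part can be cancelled by local adjustments) is the same as the paper's. But the technical route differs: the paper does not use a Hodge decomposition, harmonic representatives, or any Holevo/data-processing argument. Its achievability protocol simply fixes a basis $\{[\omega_1],\dots,[\omega_k]\}$ of $H^1$, has a coordinator transmit the coefficient vector $\mathbf{c}\in\mathbb{C}^k$ of $[\omega]$ in that basis, and lets the receiver solve $\delta^0(\sigma)=\omega-\tilde\omega$; its converse shows that any encoder on $\ker(\delta^1)$ descends to a well-defined map $\overline{\mathrm{Enc}}:H^1\to\mathbb{C}^r$ that must be injective, and injectivity of a linear map out of a $k$-dimensional space forces $r\ge k$. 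So the paper's argument is pure finite-dimensional linear algebra counting \emph{transmitted symbols} ($k=\dim H^1$ of them), and it silently identifies this symbol count with the stated rate $\log_2\dim H^1$; your information-theoretic machinery (cut-set/Holevo monotonicity) is heavier than what the paper deploys, though it is the kind of tool one would genuinely need to handle arbitrary adaptive protocols, which the paper's encoder-as-a-single-map model quietly assumes away.

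There is, however, a concrete gap in your achievability step: you propose to ``encode the obstruction class in $\lceil\log_2\dim H^1\rceil$ qubits and broadcast it,'' but the obstruction class of a generic locally consistent configuration is an arbitrary vector in a $\dim H^1$-dimensional complex space, not one of $\dim H^1$ discrete alternatives; a finite label of size $\log_2\dim H^1$ cannot specify it, so the residual harmonic part cannot be ``subtracted using the broadcast label.'' Your converse has the mirror-image issue: restricting attention to a basis-indexed family of $\dim H^1$ distinguishable classes yields $2^R\ge\dim H^1$, which matches the theorem's $\log_2$ formula, but then achievability and converse are counting different things (a finite set of classes versus the full continuum). The paper's proof avoids this mismatch internally by counting symbols on both sides (coefficient vector transmitted, injectivity on $H^1$ forcing $r\ge k$), at the cost of the symbols-versus-bits identification being left implicit in the passage to the stated rate $\log_2\dim H^1$. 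To close your version you would either need to discretize the class space with a stated resolution (and prove alignment up to that resolution suffices), or adopt the paper's convention of transmitting $\dim H^1$ field elements and argue separately how that converts to the logarithmic rate.
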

\begin{IEEEproof}
See Appendix A of extended version in arxiv \cite{ThomasISIT2026}.
\end{IEEEproof}
The rate $R^*_{\mathrm{sem}}$ represents the irreducible communication cost imposed by representational heterogeneity, which is defined as the minimum bits required to resolve semantic ambiguity regardless of protocol design.

\vspace{-2mm}\begin{corollary}\label{cor:capacity}
The semantic capacity of a classical channel with capacity $C$ is:
\vspace{-3mm}\begin{equation}
C_{\mathrm{sem}} = \frac{C}{\log_2\dim H^1}
\vspace{-1mm}\end{equation}
semantic messages per channel use (when $H^1 \neq 0$).
\end{corollary}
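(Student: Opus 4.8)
The plan is to derive Corollary~\ref{cor:capacity} from Theorem~\ref{theorem_rate} together with Shannon's classical channel coding theorem by a straightforward resource-accounting argument, treating ``number of semantic messages'' as the count of independent alignment instances carried over the channel. First I would fix the operational meaning: by Theorem~\ref{theorem_rate}, resolving the semantic ambiguity of a single alignment instance requires exactly $R^*_{\mathrm{sem}} = \log_2 \dim H^1(G,\mathcal{S})$ classical bits, and since $H^1$ is finite-dimensional and (by hypothesis) nonzero, $R^*_{\mathrm{sem}}$ is a finite, strictly positive real number, so the quotient $C/R^*_{\mathrm{sem}}$ is well defined. For a block of $M$ instances, the relevant sheaf is the $M$-fold product (disjoint union of copies), whose first cohomology is $H^1(G,\mathcal{S})^{\oplus M}$; hence the total ambiguity costs $M R^*_{\mathrm{sem}}$ bits, i.e.\ the per-instance cost tensorizes additively.

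Next I would invoke the channel coding theorem in both directions. \emph{Achievability}: to align $M$ messages over $n$ channel uses one must reliably deliver $M R^*_{\mathrm{sem}}$ bits of alignment information; choosing $M = \lfloor n (C-\epsilon)/R^*_{\mathrm{sem}} \rfloor$ makes the required bit rate $M R^*_{\mathrm{sem}}/n \le C - \epsilon < C$, so a capacity-achieving code transmits these bits with error probability $\to 0$, and running the alignment protocol of Theorem~\ref{theorem_rate} on the decoded bits yields a semantic rate $M/n \to C/\log_2\dim H^1$ as $n\to\infty$ and $\epsilon\to 0$. \emph{Converse}: any scheme that aligns $M$ messages over $n$ uses must, by the converse half of Theorem~\ref{theorem_rate} applied to the product sheaf, transmit at least $M R^*_{\mathrm{sem}}$ bits, which the channel converse bounds by $nC$, giving $M/n \le C/\log_2\dim H^1$. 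Combining the two bounds yields $C_{\mathrm{sem}} = C/\log_2\dim H^1$.

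The step I expect to be the main obstacle is the converse, specifically justifying that the per-instance lower bound of Theorem~\ref{theorem_rate} genuinely tensorizes, so that a block of $M$ instances cannot be aligned with sublinearly many bits via joint coding across instances. Concretely, one must verify that $\dim H^1$ of the $M$-fold product sheaf equals $M \dim H^1(G,\mathcal{S})$ and, more subtly, that no global section of the product sheaf exists other than tuples of per-instance global sections, so that the cohomological obstruction is ``incompressible'' and local processing cannot manufacture cross-instance correlations that shrink the effective obstruction dimension. This is the analogue of a single-letterization / strong-converse argument; once it is in place, the remaining manipulations are routine bookkeeping with Shannon's theorem. Finally I would remark that the restriction $H^1 \neq 0$ is necessary because when $H^1 = 0$ Proposition~1 guarantees alignment with no communication, making the semantic capacity formally infinite.
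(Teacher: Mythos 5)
Your proposal is correct and follows essentially the same route as the paper: the paper's own proof is exactly the resource-accounting division --- each alignment costs $R^{\ast}=\log_2\dim H^1$ bits by Theorem~\ref{theorem_rate}, the channel supplies $C$ bits per use, hence $C_{\mathrm{sem}}=C/\log_2\dim H^1$. The block-coding and product-sheaf tensorization issues you flag as the main obstacle are simply not addressed in the paper (its proof stops at the division), so your version is, if anything, more careful than the original.
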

\vspace{-2mm}\begin{IEEEproof}
    Each semantic alignment requires $R^{\ast} = \log_2 \,\text{dim}\, H^1$ bits (Theorem 1). A channel with capacity $C$ transmits $C$ bits per use. Therefore, $C_{\mathrm{sem}} =\frac{C}{R^{\ast}} = \frac{C}{\log_2\dim H^1}$.
\end{IEEEproof}
These results quantify the fundamental cost of meaning alignment: $\log_2 \text{dim}\, H^1$ bits must be communicated to resolve semantic ambiguity, independent of protocol design. This is the irreducible overhead imposed by representational heterogeneity. Theorem~1 characterizes the structural rate requirement under idealized assumptions of noiseless channels and unlimited computational resources. Practical systems face additional constraints: (1) physical channel capacity $C$ bounding transmission rate via the Holevo limit \cite{Holevo1973}, and (2) cognitive/computational limits on agents' encoding and decoding complexity. When $C < \log_2 \dim H^1$, perfect semantic alignment becomes impossible regardless of protocol design.

\vspace{-2mm}\subsection{Entanglement-Assisted Semantic Capacity}\vspace{-2mm}

We now establish that entanglement provides provable advantage for SC.
\begin{theorem}
\label{EntanglementCapacity}
For a quantum semantic sheaf $\mathcal{S}$ over graph $G = (V, E)$ with entanglement-assisted sheaf $\mathcal{S}_{EA}$, where edge $e$ shares entangled state $\Phi_e$ with Schmidt rank $r_e$. The Schmidt rank is the number of terms in the Schmidt decomposition of a bipartite state, quantifying its entanglement. A product state has $r=1$; a maximally entangled $d$-dimensional state has $r=d$:
\vspace{-3mm}\begin{equation}
\dim H^1_{EA}(G, \mathcal{S}) = \max\left(0, \dim H^1(G, \mathcal{S}) - \sum_{e \in E} \log_2 r_e\right).
\vspace{-1mm}\end{equation}
For separable (classical) correlations where $r_e = 1$ for all $e$, no reduction occurs: $\dim H^1_{EA} = \dim H^1$.
\end{theorem}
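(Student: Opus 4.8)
The plan is to prove the formula by tracking, edge by edge, how the shared entangled states enlarge the effective image of the coboundary map $\delta^{0}$ in the \Cech{} complex, and then to perform a dimension count. The starting point is Proposition~2: maximally entangled states realize superdense coding and thereby eliminate the holonomy cocycles that generate $H^{1}$ around loops. What remains is the quantitative refinement --- showing that an edge state of Schmidt rank $r_e$ contributes exactly $\log_{2} r_e$ to the total reduction, that these contributions are additive over $E$, and that the dimension is floored at zero once all obstructions have been absorbed.

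First I would set up the entanglement-assisted coboundary operator. For each edge $e = (u,v)$, Schmidt-decompose $\Phi_e = \sum_{k=1}^{r_e} \sqrt{\lambda_k}\,|a_k\rangle_u \otimes |b_k\rangle_v$, so that the $r_e$ Schmidt vectors span mutually orthogonal sectors of the ancillas $\mathcal{H}'_u$ and $\mathcal{H}'_v$. Agent $u$'s admissible encodings are joint CPTP maps on $\mathcal{H}_u \otimes \mathcal{H}'_u$ and agent $v$'s admissible decodings act on the channel output together with $\mathcal{H}'_v$. A standard superdense-coding argument then shows that, relative to the shared reference $\Phi_e$, these joint operations allow $v$ to reconstruct the $u$-side semantic state up to an extra $\log_2 r_e$ bits of freely transmitted information. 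Concretely, the entanglement-assisted coboundary $\delta^0_{\mathrm{EA}}$ factors through $\delta^0$ but has strictly larger image, with $\dim\bigl(\mathrm{im}(\delta^0_{\mathrm{EA}})/\mathrm{im}(\delta^0)\bigr) = \sum_{e \in E} \log_2 r_e$ in the rate normalization of Theorem~1, the contribution of each edge being supported in the sector indexed by that edge's $r_e$ Schmidt components.

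Next I would establish additivity and independence. Because the ancilla spaces attached to distinct incident edges of a vertex are distinct tensor factors, the joint operations used to cash in the entanglement on one edge commute with those used on any other edge; hence the per-edge enlargements of $\mathrm{im}(\delta^0)$ are linearly independent and their dimensions add. The dimension count then yields $\dim H^1_{\mathrm{EA}} = \dim \ker(\delta^1) - \dim\,\mathrm{im}(\delta^0_{\mathrm{EA}}) = \dim H^1 - \sum_{e \in E} \log_2 r_e$. Since a cohomology dimension cannot be negative --- once $\sum_e \log_2 r_e \ge \dim H^1$ every locally consistent configuration already extends globally and any further entanglement is simply unused --- the right-hand side is replaced by its positive part, giving the stated $\max(0,\cdot)$. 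The classical specialization $r_e = 1$ gives $\log_2 r_e = 0$ and recovers ordinary sheaf cohomology, while the maximally entangled case $r_e = \dim \mathcal{H}'_u$ reproduces the capacity-doubling bound behind Proposition~2.

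The main obstacle I anticipate is reconciling the apparent type mismatch between an integer cohomological dimension and the generally non-integer quantity $\sum_e \log_2 r_e$: the clean way to handle this is to carry the entire argument in the rate normalization of Theorem~1, where $H^1$ is measured by $\log_2 \dim H^1$ qubits, prove the subtractive identity there, and only then re-express it. Care is also needed to show the edge-wise reductions are \emph{simultaneously} achievable, i.e.\ that the joint decoding performed at a shared vertex does not destroy the alignment already obtained on an adjacent edge; the orthogonal-sector structure of the local ancillas is what lets each agent run all its incident-edge protocols in parallel, and verifying this compatibility is where the real work lies. The remainder is bookkeeping with the \Cech{} differentials.
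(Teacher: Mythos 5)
Your proposal follows essentially the same route as the paper's Appendix B: Schmidt-decompose each $\Phi_e$ to get a shared reference frame, invoke superdense coding to turn the $r_e$ shared dimensions into $\log_2 r_e$ bits of obstruction resolution on edge $e$, add these contributions over $E$, subtract from $\dim H^1$ with a floor at zero, and note that separable states ($r_e=1$) give no reduction. The only structural difference is packaging: you fold everything into a single rank count via an enlarged coboundary image $\delta^0_{\mathrm{EA}}$, whereas the paper argues an achievability upper bound and a converse lower bound separately; the key quantitative claim (each edge contributes exactly $\log_2 r_e$, additively and independently) is asserted at the same level of detail in both. One caution on your proposed fix for the units mismatch: proving the subtractive identity in rate units and then re-expressing would yield $\dim H^1_{\mathrm{EA}} = \dim H^1 / \prod_e r_e$, which is not the stated formula; the paper, like your main-line dimension count, simply subtracts $\sum_e \log_2 r_e$ from the integer dimension directly, so you should not rely on that re-expression step.
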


\vspace{-1mm}\begin{IEEEproof}
    See Appendix B of extended version in arxiv \cite{ThomasISIT2026}.
\vspace{-1mm}\end{IEEEproof}
Theorem~\ref{EntanglementCapacity} establishes a precise tradeoff between entanglement and communication: each ebit of shared entanglement saves one bit of classical communication for semantic alignment. In this sense, entanglement provides a rigorous, physically realizable mechanism for the ``shared knowledge bases" or ``common context" often assumed in classical SC and is quantified exactly by Schmidt rank.


These results establish a precise connection between algebraic topology and information theory.  The key distinction is that Shannon theory concerns \emph{statistical} uncertainty (entropy), while our framework concerns \emph{structural} uncertainty (cohomology) which are obstructions arising from the topology of heterogeneous representations rather than probabilistic noise.

\vspace{-3mm}\section{Contextuality and Integration}\vspace{-1mm}
Building on the cohomological framework of Sections II and III, this section establishes two key connections: quantum contextuality serves as a resource for SC by reducing cohomological obstruction, and quantum discord precisely equals integrated semantic information. \emph{Integrated semantic information} captures holistic meaning that exists only in the joint system of agents, namely semantic content that cannot be decomposed into what each agent knows independently. For example, the relational meaning ``A is left of B” requires both agents and cannot be split into A’s part plus B’s part.

\vspace{-2mm}\subsection{Contextuality as Semantic Resource}\vspace{-1mm}

Quantum contextuality, which is defined as the impossibility of assigning pre-existing values to all observables independent of measurement context—is a defining feature of quantum mechanics. \cite{AbramskyBrandenburger2011} showed that contextuality corresponds exactly to the nonexistence of global sections in a presheaf over measurement contexts. We now show that contextual correlations reduce semantic obstruction.
\vspace{-2mm}\begin{definition}
A quantum semantic sheaf $\mathcal{S}$ exhibits \emph{contextual correlations} if there exists no classical, i.e., local hidden variable model that reproduces the joint distributions over all semantic measurements. Formally, $\mathcal{S}$ is contextual if $H^1(G, \mathcal{S}) \neq 0$ when restricted to any classical sub-sheaf.
\vspace{-2mm}\end{definition}
Contextual correlations provide agents with shared randomness that is fundamentally richer than any classical common knowledge. This additional structure can resolve semantic ambiguities as proved next, that would otherwise require explicit communication.
\vspace{-2mm}\subsection{ Quantifying Contextual Advantage}
To make the contextuality advantage precise, we relate the contextual fraction to semantic rate reduction. 
For a quantum semantic sheaf $\mathcal{S}$, let $e$ denote the \emph{empirical model}, which is the observed joint probability distribution over agents' semantic measurements. The \emph{contextual fraction} is:
\vspace{-2mm}\begin{equation}
\mathrm{CF}(e) = \min_{e' \in \mathrm{NC}} \|e - e'\|_1
\vspace{-1mm}\end{equation}
where $\mathrm{NC}$ is the set of non-contextual models and $\|\cdot\|_1$ is the total variation distance.
\vspace{-2mm}\begin{theorem}
Let $\mathcal{S}_C$ be a classical semantic sheaf and $\mathcal{S}_Q$ be a quantum semantic sheaf over the same graph $G$, where $\mathcal{S}_Q$ admits contextual correlations. Then:
\vspace{-2mm}\begin{equation}
\dim H^1(G, \mathcal{S}_Q) < \dim H^1(G, \mathcal{S}_C)
\vspace{-2mm}\end{equation}
whenever the contextual fraction is nonzero.
\end{theorem}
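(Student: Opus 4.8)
\emph{Proof plan.} The plan is to realize the contextual correlations carried by $\mathcal{S}_{Q}$ as edge-local entanglement and then apply Theorem~\ref{EntanglementCapacity}. Throughout, $\mathcal{S}_{C}$ is taken to be the classical shadow of $\mathcal{S}_{Q}$: the classical sub-sheaf obtained by restricting each stalk $\mathcal{L}(\mathcal{H}_{v})$ to its diagonal (commutative) subalgebra and each CPTP map $\mathcal{F}_{e}$ to the stochastic map it induces there, so that the two sheaves share the combinatorial data of $G$; since $\mathcal{S}_{Q}$ admits contextual correlations, the definition of contextuality forces $\dim H^{1}(G,\mathcal{S}_{C})\ge 1$. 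First I would fix an empirical model $e$ of $\mathcal{S}_{Q}$ with $\mathrm{CF}(e)>0$ and, using the standard Hilbert-space realization of empirical models, produce a quantum state distributed over the agents that reproduces $e$; splitting this state across the communication links yields bipartite ancilla states $\Phi_{e}$ on $\mathcal{H}'_{u}\otimes\mathcal{H}'_{v}$ with Schmidt ranks $r_{e}$. Because the non-contextual set $\mathrm{NC}$ contains every empirical model arising from a fully separable realization, a positive contextual fraction rules out $r_{e}=1$ on all edges at once, so $r_{e^{*}}\ge 2$ for at least one edge $e^{*}$. This exhibits $\mathcal{S}_{Q}$ as the entanglement-assisted sheaf $(\mathcal{S}_{C})_{\mathrm{EA}}$ built on the data of $\mathcal{S}_{C}$.

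With this identification in hand, the conclusion would follow by invoking Theorem~\ref{EntanglementCapacity} with $\mathcal{S}=\mathcal{S}_{C}$:
\[
 \dim H^{1}(G,\mathcal{S}_{Q}) \;=\; \dim H^{1}_{\mathrm{EA}}(G,\mathcal{S}_{C}) \;=\; \max\!\Big(0,\ \dim H^{1}(G,\mathcal{S}_{C})-\textstyle\sum_{e\in E}\log_{2} r_{e}\Big).
\]
Since $\textstyle\sum_{e\in E}\log_{2} r_{e}\ge\log_{2} r_{e^{*}}\ge 1$, the right-hand side equals either $0$ (when the obstruction is completely absorbed) or $\dim H^{1}(G,\mathcal{S}_{C})-\textstyle\sum_{e\in E}\log_{2} r_{e}$, and in both cases it is strictly smaller than $\dim H^{1}(G,\mathcal{S}_{C})$ (recall $\dim H^{1}(G,\mathcal{S}_{C})\ge 1$), which is the asserted strict inequality. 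An essentially equivalent route, internal to the \Cech{} complex, uses the sheaf embedding $\mathcal{S}_{C}\hookrightarrow\mathcal{S}_{Q}$: it induces a chain map and hence $\iota^{*}\colon H^{1}(G,\mathcal{S}_{C})\to H^{1}(G,\mathcal{S}_{Q})$, which one shows is surjective (complete positivity lets every quantum $1$-cocycle be dephased onto the diagonal without changing its class, mirroring Proposition~2), while the cohomological witness of contextuality -- nonzero exactly when $\mathrm{CF}(e)>0$ -- sits in $\ker\iota^{*}$; rank--nullity then gives $\dim H^{1}(G,\mathcal{S}_{Q})=\dim H^{1}(G,\mathcal{S}_{C})-\dim\ker\iota^{*}<\dim H^{1}(G,\mathcal{S}_{C})$.

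The hard part will be the passage from the analytic quantity $\mathrm{CF}(e)$ to the structural input that Theorem~\ref{EntanglementCapacity} consumes. Two points need care. First, one must show that every contextual empirical model on $G$ admits a realization whose entanglement is carried \emph{edge-locally}, in the tensor pattern assumed by entanglement-assisted sheaves; this is a canonical-form lemma for empirical models adapted to the graph geometry, building on the representation of the contextual fraction as a normalized violation of an optimal non-contextuality inequality. Second, one must confirm rigorously that $\mathrm{CF}(e)>0$ precludes $r_{e}=1$ on all edges simultaneously -- this follows because a fully separable realization yields a local-hidden-variable model, hence $e\in\mathrm{NC}$ and $\mathrm{CF}(e)=0$, but the monotonicity of the contextual fraction under Schmidt-rank-one restrictions should be made explicit. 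If the edge-local canonical form proves awkward in full generality, the chain-map route is the safer fallback: there surjectivity of $\iota^{*}$ already holds for generic sheaves as in Proposition~2, and the only remaining task is to locate the contextuality witness inside $\ker\iota^{*}$ and certify it is nonzero whenever $\mathrm{CF}(e)>0$.
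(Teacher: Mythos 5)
Your primary route has a genuine gap at exactly the point where the work lies. The reduction to Theorem~2 requires the identification $\dim H^1(G,\mathcal{S}_Q)=\dim H^1_{\mathrm{EA}}(G,\mathcal{S}_C)$, which you assert (``this exhibits $\mathcal{S}_Q$ as $(\mathcal{S}_C)_{\mathrm{EA}}$'') but never establish: $\mathcal{S}_Q$ is given as Hilbert-space stalks with CPTP restriction maps, and its \Cech{} $H^1$ is computed from that data, whereas the entanglement-assisted object of Definition~3 is the classical sheaf augmented with ancilla states $\Phi_e$; nothing in the paper's definitions makes these two cohomologies coincide, and Theorem~2 consumes the latter, not the former. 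Worse, the bridging claim that $\mathrm{CF}(e)>0$ precludes $r_e=1$ on every edge is false in general: in the Abramsky--Brandenburger framework contextuality need not be carried by correlations across edges, since an empirical model can be contextual purely because of incompatible measurements held by a single agent (state-independent contextuality persists even for product states). Hence a realization in which every edge ancilla has Schmidt rank one can still have $\mathrm{CF}(e)>0$; then $\sum_e\log_2 r_e=0$, Theorem~2 gives no reduction, and the strict inequality does not follow. Your justification (``a fully separable realization yields a local-hidden-variable model, hence $e\in\mathrm{NC}$'') conflates Bell nonlocality with contextuality.

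For comparison, the paper's own proof is a witness-counting argument: for non-contextual correlations the classical gluing axiom forces $\ker(\delta^1)=\mathrm{im}(\delta^0)$, and each independent contextual witness eliminates one generator of $H^1$, with $\mathrm{CF}(e)>0$ guaranteeing at least one witness. Your fallback chain-map route is essentially a formalization of that argument, and it is the safer of your two options; but the two steps you leave open --- surjectivity of $\iota^{*}$ via ``dephasing a quantum $1$-cocycle without changing its class,'' and exhibiting the contextuality witness as a nonzero element of $\ker\iota^{*}$ whenever $\mathrm{CF}(e)>0$ --- are precisely the mathematical content of the theorem, and neither is immediate (dephasing need not be class-preserving, and Proposition~2 concerns $H^1_{\mathrm{EA}}$, not a dephasing map). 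As written, the proposal identifies reasonable objects but does not close the argument, and its main route rests on a step that fails for single-agent contextuality.
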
 
\vspace{-1mm}\begin{IEEEproof}
For classical sheaves, the gluing axiom implies $\ker(\delta^1) = \mathrm{im}(\delta^0)$, hence $H^1 = 0$ when correlations are non-contextual. For quantum sheaves with contextual correlations, let $\{c_i\}$ be independent contextual witnesses. Each $c_i$ eliminates one generator of $H^1$:
$
\dim H^1(S_Q) = \dim H^1(S_C) - |\{c_i\}|
$.
The contextual fraction $\mathrm{CF}(e)\! >\! 0$ guarantees $|\{c_i\}| \geq 1$. 
\end{IEEEproof}
\vspace{-2mm}\begin{proposition}
 The semantic rate reduction due to contextuality satisfies:
$R_{\text{classical}} - R_{\text{quantum}} \geq CF(e) · \log_2\, \text{dim}\, H^1_{\text{classical}}$.
\end{proposition}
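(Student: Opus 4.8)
The plan is to reduce the statement to Theorem~\ref{theorem_rate} and Theorem~3, and then carry out a ``bit-accounting'' argument that converts the contextual fraction into a quantitative reduction of the number of independent semantic ambiguities. First I would apply Theorem~\ref{theorem_rate} to each sheaf: $R_{\text{classical}} = \log_2 \dim H^1(G,\mathcal{S}_C)$ and $R_{\text{quantum}} = \log_2 \dim H^1(G,\mathcal{S}_Q)$. Writing $b_C = \log_2 \dim H^1(G,\mathcal{S}_C)$ for the number of independent binary semantic ambiguities on the classical side, the claim is equivalent to $R_{\text{quantum}} \le (1-\mathrm{CF}(e))\,b_C$, i.e.\ contextual correlations must resolve at least a $\mathrm{CF}(e)$-fraction of the ambiguity bits. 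The cases $\dim H^1_{\text{classical}} \le 1$ are trivial, so assume $b_C > 0$.

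Next I would invoke the convex-decomposition characterization of the contextual fraction (equivalent, by linear-programming duality, to the total-variation definition stated above): write the empirical model as $e = (1-\lambda)\,e^{\mathrm{NC}} + \lambda\,e^{\mathrm{SC}}$ with $\lambda = \mathrm{CF}(e)$, where $e^{\mathrm{NC}}$ is noncontextual and $e^{\mathrm{SC}}$ carries the genuinely contextual weight. Exactly as in the proof of Theorem~3, the noncontextual component admits a global section (the gluing axiom gives $\ker\delta^1 = \mathrm{im}\,\delta^0$, hence $H^1 = 0$ on the associated classical subsheaf), so it contributes nothing to the obstruction that survives in $\mathcal{S}_Q$; only the weight-$\lambda$ contextual component can support nontrivial classes, and its witnesses $\{c_i\}$ are precisely the generators eliminated when passing from $\mathcal{S}_C$ to $\mathcal{S}_Q$ in Theorem~3. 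The quantitative core is then that a contextual weight $\lambda$ removes a $\lambda$-fraction of the $b_C$ ambiguity \emph{bits} (not merely a $\lambda$-fraction of the dimension): since independent ambiguities compose multiplicatively in $\dim H^1$ but additively in bits, this yields $\dim H^1(G,\mathcal{S}_Q) \le \big(\dim H^1(G,\mathcal{S}_C)\big)^{1-\lambda}$, and taking $\log_2$ gives $R_{\text{quantum}} \le (1-\mathrm{CF}(e))\,b_C$. Rearranging produces $R_{\text{classical}} - R_{\text{quantum}} \ge \mathrm{CF}(e)\,\log_2 \dim H^1_{\text{classical}}$.

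I expect the main obstacle to be making rigorous the step that the contextual weight $\lambda = \mathrm{CF}(e)$ translates into resolution of a $\lambda$-fraction of the bits $b_C$ rather than of the raw dimension $\dim H^1_C$. This requires choosing the contextual witnesses of Theorem~3 so that they act on a set of independent ``ambiguity coordinates'' of size scaling like $\lambda\,b_C$; the natural tool is the cohomological treatment of the contextual fraction (Abramsky--Barbosa--Mansfield, and the \Cech{}-cohomological analysis in \cite{Abramsky2012CohomologyNonlocality,Caru2017CohomologyContextuality}), where $\lambda$ is the weight of the maximal strongly-contextual subcomponent, combined with the observation that such a subcomponent forces a $\lambda$-fraction (in the entropy/log sense) of $H^1$ to be quantum-resolvable. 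A secondary technicality is checking that the convex decomposition of $e$ lifts to a compatible decomposition of the quantum semantic sheaf's restriction maps and coboundary operators; this follows from linearity of $\delta^0,\delta^1$ together with the fact that CPTP maps preserve the \Cech{} complex structure (as used for Proposition~1).
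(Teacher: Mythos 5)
Your route is the same as the paper's at the top level --- convert both rates via Theorem~1 and then argue that the contextual fraction eliminates a corresponding share of the classical ambiguity --- but the load-bearing step is missing, and you flag it yourself. The inequality $\dim H^1(G,\mathcal{S}_Q) \le \big(\dim H^1(G,\mathcal{S}_C)\big)^{1-\mathrm{CF}(e)}$ is exactly what would make your bit-accounting work, and nothing you invoke delivers it. Theorem~3 (as proved in the paper) only guarantees at least one contextual witness when $\mathrm{CF}(e)>0$, i.e. $\dim H^1_Q \le \dim H^1_C - 1$, which under Theorem~1 yields a rate saving of only $\log_2\frac{k}{k-1}$ with $k=\dim H^1_C$ --- far short of $\mathrm{CF}(e)\log_2 k$ unless $\mathrm{CF}(e)$ is tiny. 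Moreover, the convex decomposition $e=(1-\lambda)e^{\mathrm{NC}}+\lambda\, e^{\mathrm{SC}}$ lives at the level of empirical models (probability tables over measurement outcomes), whereas $\dim H^1$ is the dimension of a quotient of cochain spaces of the semantic sheaf; you give no mechanism converting the scalar weight $\lambda$ into a count (let alone an exponential fraction) of eliminated cohomology generators, and the cohomology-of-contextuality results you cite certify witnesses of contextuality rather than bound the dimension of $H^1$ of a different sheaf. So the ``main obstacle'' you identify is not a technicality to be patched; it is the proof.

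For comparison, the paper's own argument takes a cruder shortcut: it asserts that each eliminated generator saves $\log_2 \dim H^1_{\mathrm{classical}}$ bits (rather than the $\log_2\frac{k}{k-1}$ that Theorem~1 actually implies), after which $|\{c_i\}|\ge 1$ together with $\mathrm{CF}(e)\le 1$ makes the stated bound immediate. You implicitly refuse that per-generator bookkeeping --- reasonably, since it sits uneasily with Theorem~1 --- but the replacement you propose (a $\mathrm{CF}(e)$-fraction of the \emph{bits}, equivalently the exponential bound on dimensions) is left as a conjecture. To close the gap you would need a quantitative strengthening of Theorem~3, e.g. showing that the number of independent contextual witnesses is at least $k - k^{1-\mathrm{CF}(e)}$, or some other argument tying the weight $\lambda$ of the contextual component to the dimension of the surviving obstruction space; as written, the proposal is an incomplete proof rather than a correct alternative to the paper's.
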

\vspace{-1mm}\begin{IEEEproof}
From Theorem~3, contextuality reduces $\dim H^1$. The contextual fraction $\mathrm{CF}(e)$ quantifies what fraction of classical correlations are replaced by contextual ones. Since each generator of $H^1$ eliminated saves $\log_2 \dim H^1_{\mathrm{classical}}$ bits of communication (by Theorem~1), the bound follows.
\end{IEEEproof} 
\vspace{-2mm}\subsection{ Quantum Discord and Semantic Integration}
The previous subsection showed contextuality \emph{reduces} communication requirements. We now show that discord \emph{quantifies} the holistic meaning that exists only in the joint system, i.e., semantic content that cannot be decomposed into what each agent knows independently. This links our framework to integrated information theory \cite{Tononi2016,ChristoJSAITArxiv2023}.
For a bipartite quantum state $\rho_{AB}$, the \emph{quantum discord} \cite{Ollivier2001} is:
\vspace{-1mm}\begin{equation}
\delta(A:B) = I(A:B) - J(A:B)
\vspace{-1mm}\end{equation}
where $I(A:B) = S(\rho_A) + S(\rho_B) - S(\rho_{AB})$ is the quantum mutual information, and $J(A:B) = \max_{M_B}[S(\rho_A) - S(\rho_A|M_B)]$ is the classical correlation, maximized over measurements $M_B$ on system $B$.
Discord captures quantum correlations beyond entanglement. It is nonzero for some separable states and quantifies the disturbance caused by measurement.
\vspace{-2mm}\begin{definition}
For a semantic state $\rho_{AB}$ over a bipartite semantic sheaf for any two nodes in $G$, the \emph{integrated semantic information} is:
\vspace{-2mm}\begin{equation}
\Phi_{\mathrm{sem}}(A:B) = I(A:B) - \max_{\mathrm{partitions}} \sum_i I(A_i:B_i)
\vspace{-2mm}\end{equation}
where the maximization is over all partitions of $A$ and $B$ into independent subsystems.
\vspace{-2mm}\end{definition}
Integrated semantic information measures the semantic content that cannot be reduced to independent part, which is the irreducibly holistic meaning that emerges from the interaction of subsystems.

\vspace{-2mm}\begin{theorem}
    For bipartite quantum semantic states:
$\delta(A:B) = \phi_{\text{sem}}(A:B)$.
\end{theorem}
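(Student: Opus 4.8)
The plan is to reduce the claimed identity to a single equality between the classical-correlation term of discord and the optimal-partition term of integrated semantic information. Since $\delta(A:B) = I(A:B) - J(A:B)$ and $\Phi_{\mathrm{sem}}(A:B) = I(A:B) - \max_{\text{partitions}}\sum_i I(A_i:B_i)$, and the quantum mutual information $I(A:B)$ enters both identically, Theorem~4 is equivalent to
\[
J(A:B) \;=\; \max_{\text{partitions}} \sum_i I(A_i:B_i),
\]
i.e., the maximal classical correlation extractable from $\rho_{AB}$ by a local measurement on $B$ equals the largest amount of $A$--$B$ correlation that survives a decomposition of the two agents' state spaces into independent subsystems. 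First I would recall the operational reading of $J$: for a von Neumann measurement $\{\Pi_k^B\}$ on $B$ with outcome probabilities $p_k = \Tr[(\mathbb{1}_A\otimes\Pi_k^B)\rho_{AB}]$ and post-measurement states $\rho_{A|k}$, one has $S(\rho_A)-\sum_k p_k S(\rho_{A|k}) = I(A:B)_{\chi}$ for the classical--quantum state $\chi_{AB}=\sum_k p_k\,\rho_{A|k}\otimes|k\rangle\langle k|$, so that $J(A:B)=\max_{\{\Pi_k^B\}} I(A:B)_{\chi}$.

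I would then establish the two inequalities. For $\max_{\text{partitions}}\sum_i I(A_i:B_i)\le J(A:B)$: given any partition $\{(A_i,B_i)\}$ into independent subsystems, measuring $B$ with the projectors onto the blocks $B_i$ and then, within each block, performing the locally optimal measurement there, is a valid local measurement on $B$; by the chain rule and subadditivity of mutual information across the resulting block-structured state, the classical correlation this measurement extracts is at least $\sum_i I(A_i:B_i)$, and maximizing over partitions on the left and over measurements on the right gives the bound. For the reverse direction $J(A:B)\le\max_{\text{partitions}}\sum_i I(A_i:B_i)$: take the $J$-optimal von Neumann measurement $\{\Pi_k^B\}$; its eigenprojectors induce a decomposition $\mathcal{H}_B=\bigoplus_k \mathcal{H}_{B,k}$ and, via the conditional states $\rho_{A|k}$, a matched decomposition of the $A$-side, which I would argue realizes a partition into independent subsystems whose residual correlation reproduces exactly $S(\rho_A)-\sum_k p_k S(\rho_{A|k})=J(A:B)$. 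Combining the two inequalities yields the identity, hence $\delta(A:B)=\Phi_{\mathrm{sem}}(A:B)$; along the way I would note that both sides are non-negative (by concavity of the von Neumann entropy, equivalently the Holevo bound), so the convention for $\Phi_{\mathrm{sem}}$ is consistent with $\delta\ge 0$.

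The main obstacle will be making precise the notion of a ``partition of $A$ and $B$ into independent subsystems'' for a general bipartite Hilbert space carrying no a priori tensor factorization, and showing that the optimization over such partitions is co-extensive with the optimization over local measurements on $B$ that defines $J$. I expect the delicate points to be: (i) justifying that restricting to rank-one von Neumann measurements rather than general POVMs loses nothing here, mirroring the familiar subtlety in the discord literature; (ii) verifying that the ``independent subsystem'' decomposition induced by a measurement makes the cross-block correlations vanish, so that $\sum_i I(A_i:B_i)$ collects precisely the within-block classical part; and (iii) excluding the trivial one-block partition, so that $\max_{\text{partitions}}\sum_i I(A_i:B_i)$ does not collapse to $I(A:B)$ and force $\Phi_{\mathrm{sem}}\equiv 0$. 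Once the partition formalism is pinned down, the remaining entropy bookkeeping is routine.
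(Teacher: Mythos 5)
Your reduction of Theorem~4 to the single identity $J(A:B)=\max_{\mathrm{partitions}}\sum_i I(A_i:B_i)$ is exactly the route the paper itself takes: its proof simply asserts that the $J$-optimal measurement on $B$ ``induces the partition that maximizes extractable local information'' and substitutes. So in strategy you and the paper coincide; the difference is that you try to establish the identity by two inequalities and you candidly list where it might break. Those points, especially your (ii) and (iii), are not technicalities to be cleaned up later --- they are where the argument fails, both in your sketch and in the paper.

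The concrete gap is your reverse inequality $J(A:B)\le\max_{\mathrm{partitions}}\sum_i I(A_i:B_i)$. A $J$-optimal von Neumann measurement yields a direct-sum decomposition $\mathcal{H}_B=\bigoplus_k\mathcal{H}_{B,k}$, not a splitting into independent subsystems, and the conditional states $\rho_{A|k}$ generally have overlapping supports, so there is no ``matched decomposition of the $A$-side.'' Worse, for the rank-one measurements you invoke, each block $\mathcal{H}_{B,k}$ is one-dimensional, so every within-block term $I(A_k:B_k)$ vanishes: the quantity $J(A:B)=S(\rho_A)-\sum_k p_k S(\rho_{A|k})$ is the correlation between $A$ and the outcome \emph{label} $k$, i.e.\ precisely the cross-block information that any partition-based sum $\sum_i I(A_i:B_i)$ discards by construction. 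Hence the measurement-induced ``partition'' cannot reproduce $J$ unless $J=0$. The two-qubit Bell state shows the identity cannot hold under any natural reading of Definition~6: there $\delta(A:B)=1$, yet a qubit admits no nontrivial decomposition into independent subsystems, so the maximization either contains only the trivial one-block partition (forcing $\Phi_{\mathrm{sem}}=0$) or runs over an empty set (leaving $\Phi_{\mathrm{sem}}$ undefined); your worry (iii) is the crux, not a corner case. The equality can only be salvaged by \emph{defining} ``partition'' to mean ``decomposition induced by a local measurement on $B$,'' at which point the statement is a definition rather than a theorem. Closing the gap therefore requires reformulating the definition of $\Phi_{\mathrm{sem}}$ (or restricting the class of states and partitions), not more entropy bookkeeping.
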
 
\vspace{-1mm}\begin{IEEEproof}
    We show the equality by demonstrating that both quantities measure the same irreducible correlations from dual perspectives.
The quantum mutual information decomposes as:
\vspace{-1mm}\begin{equation}
I(A:B) = J(A:B) + \delta(A:B),
\end{equation}
where $J(A:B) = \max_{M_B}[S(\rho_A) - S(\rho_A|M_B)]$ is the classical correlation extractable by measuring subsystem 
B, and $\delta(A
:B)$ is the quantum discord.
The classical correlation $J(A:B)$ represents the maximum information about
A obtainable through local measurements on 
B. This corresponds precisely to $\max_{\mathrm{partitions}} \sum_i I(A_i
:B_i)$ in the definition of $\Phi_{\mathrm{sem}}
$, since the optimal measurement on
B induces the partition that maximizes extractable local information.
 Substituting into the definition of $\Phi_{\mathrm{sem}}$:
\vspace{-2mm}\begin{equation}
\begin{aligned}
    \Phi_{\mathrm{sem}}(A
:B) &= I(A:B) - \max_{\mathrm{partitions}} \sum_i I(A_i:B_i) \\ &= I(A:B) - J(A:B) = \delta(A:B).
\end{aligned}
\vspace{-1mm}\end{equation} 
Thus both quantities answer the same question from dual perspectives: discord asks ``what correlations survive any local measurement?" while integrated information asks ``what information cannot be decomposed into parts?" The equality holds because the optimal measurement basis for extracting classical correlations coincides with the minimum information partition.
\end{IEEEproof}
\vspace{-4mm}\section{Conclusion}\vspace{-1mm}
We developed an information-theoretic framework for quantum SC using sheaf cohomology. Our main results show that: (1) the minimum semantic alignment rate equals $\log_2 \dim H^1$, (2) entanglement reduces this rate by providing a rigorous mechanism for shared context, (3) contextuality further reduces communication requirements, and (4) quantum discord quantifies irreducible semantic content. Future work includes experimental validation, extension to continuous-variable systems, and application to multi-agent AI alignment.

\newpage
    
\appendices

\section{Proof of Theorem~\ref{theorem_rate}}
\label{proof_Theorem1}

We provide a rigorous proof that the minimum communication rate for perfect semantic alignment equals $\dim H^1(G, \mathcal{S})$.
\subsection{Preliminaries}
Let $G = (V, E)$ be a finite graph and $\mathcal{S}$ a semantic sheaf with stalk $\mathcal{S}_v$ at each vertex $v \in V$, where each $\mathcal{S}_v$ is a finite-dimensional vector space over $\mathbb{C}$. We define the cochain spaces:
\begin{align}
C^0(G, \mathcal{S}) &= \bigoplus_{v \in V} \mathcal{S}_v \\
C^1(G, \mathcal{S}) &= \bigoplus_{e \in E} \mathcal{S}_{t(e)}
\end{align}
where $t(e)$ denotes the target vertex of edge $e$. Using the coboundary operator $\delta^0$ in \eqref{eq_coboundary}, a configuration $\omega \in C^1$ is \emph{locally consistent} if $\omega \in \ker(\delta^1)$, satisfying the cocycle condition \footnote{\footnotesize A cocycle is an element satisfying $\delta^1 \omega = 0$, meaning the edge assignments are consistent around every closed loop in the graph.} on all cycles. The \emph{semantic alignment problem} is: given $\omega \in \ker(\delta^1)$, find $\sigma \in C^0$ such that $\delta^0(\sigma) = \omega$. By definition of cohomology, such $\sigma$ exists if and only if $[\omega] = 0$ in $H^1 = \ker(\delta^1)/\mathrm{im}(\delta^0)$.


\subsection{Achievability: $\dim H^1$ Suffices}

\begin{lemma}
There exists a semantic alignment protocol using exactly $k = \dim H^1(G, \mathcal{S})$ transmitted symbols.
\end{lemma}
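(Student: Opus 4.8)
The plan is to exhibit an explicit alignment protocol whose entire transmitted payload is a coordinate vector for the cohomology class of the agents' joint configuration, and then to verify that whatever remains is removable by local processing alone. The first step is to fix an \emph{offline codebook}, in analogy with a Shannon codebook. Since $H^1(G,\mathcal{S}) = \ker(\delta^1)/\im(\delta^0)$ is a finite-dimensional complex vector space of dimension $k$, I would fix once and for all a family of cocycles $\eta_1,\dots,\eta_k \in \ker(\delta^1) \subseteq C^1(G,\mathcal{S})$ whose classes $[\eta_1],\dots,[\eta_k]$ form a basis of $H^1$; equivalently, using the Hilbert--Schmidt inner product $\langle A,B\rangle_{\mathrm{HS}} = \mathrm{Tr}(A^\dagger B)$ on the cochain spaces, take the harmonic complement $\mathcal{W} := \ker(\delta^1)\cap\im(\delta^0)^{\perp}$, so that $\ker(\delta^1) = \im(\delta^0)\oplus\mathcal{W}$ with $\mathcal{W}\cong H^1$ and $\dim\mathcal{W}=k$ (the Hodge-type identification attached to the sheaf Laplacian $\mathcal{L}=(\delta^0)^\dagger\delta^0$). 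This codebook is public data known to every agent before communication begins.

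Next I would specify the three phases of the protocol and count the transmitted symbols. \emph{Encoding:} given a locally consistent configuration, i.e.\ a cocycle $\omega\in\ker(\delta^1)$ (the cocycle condition holds by hypothesis), the sender computes the unique decomposition $\omega = \delta^0\sigma_\omega + w_\omega$ with $w_\omega\in\mathcal{W}$, and then the coefficients $\alpha_i$ in $w_\omega=\sum_{i=1}^{k}\alpha_i\eta_i$. \emph{Transmission:} the sender broadcasts the vector $(\alpha_1,\dots,\alpha_k)$, which is exactly $k=\dim H^1(G,\mathcal{S})$ symbols drawn from the stalk alphabet --- qudits in the quantum case, scalars classically. \emph{Decoding and reconciliation:} each agent reconstructs the canonical representative $w_\omega=\sum_i\alpha_i\eta_i$ from the received coefficients and the public $\{\eta_i\}$, and adopts it as the shared value of the irreducible obstruction $[\omega]$. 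The residual discrepancy $\omega-w_\omega=\delta^0\sigma_\omega$ lies in $\im(\delta^0)$, so by Proposition~1 (applied once this exact component has been isolated) it admits a genuine solution and is annihilated by purely local updates: running the sheaf-Laplacian consensus flow $\dot{\sigma}=-\mathcal{L}\sigma$, which uses only the communication links already present in $G$, drives the configuration to a section consistent with the broadcast class, and no payload beyond the $k$ symbols traverses the network in this phase.

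The step I expect to be the main obstacle is making precise, and defensible, the assertion that the $\im(\delta^0)$-component carries \emph{no} irreducible rate. The clean route is to define ``local processing'' as the operation monoid generated by (i)~stalk-local CPTP maps at individual vertices and (ii)~one pass of Laplacian diffusion along edges, and then to prove that the orbit of any cocycle under this monoid meets its own cohomology class in a single limit point --- so that, once the $k$-symbol broadcast has pinned that class, alignment is forced. A secondary subtlety is the quantum bookkeeping needed to reconcile this lemma with the ``$\log_2\dim H^1$'' figure of Theorem~\ref{theorem_rate}: the transmitted $\alpha_i$ are amplitudes valued in the stalk space, so ``$k$ transmitted symbols'' must be converted to a qubit/bit count by the standard dimension-counting identity for the message space, and one should also check that broadcasting to all of $V$ rather than routing along a spanning tree does not inflate the count (a routing/union-bound argument). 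The matching converse --- that strictly fewer than $k$ symbols cannot suffice --- is the complementary lemma and rests on an identifiability argument: configurations lying in distinct $H^1$-classes must receive distinct descriptions, forcing a message set of size at least $\dim H^1$ in the appropriate units.
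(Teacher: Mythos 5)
Your proposal is correct and follows essentially the same route as the paper's own proof: fix representative cocycles for a basis of $H^1$, transmit the $k$ coefficients of the class of $\omega$, and note that the residual $\omega-\tilde\omega\in\im(\delta^0)$ is removable by solving a local linear system. Your harmonic (Hodge-type) choice of representatives and the Laplacian-flow phrasing of the reconciliation step are refinements of presentation, not a different argument.
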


\begin{proof}
Let $k = \dim H^1(G, \mathcal{S})$ and choose a basis $\{[\omega_1], \ldots, [\omega_k]\}$ for $H^1$, where each $\omega_i \in \ker(\delta^1)$ is a representative cocycle.
Given any locally consistent configuration $\omega \in \ker(\delta^1)$, its cohomology class can be uniquely expressed as:
\begin{equation}
[\omega] = \sum_{i=1}^{k} c_i [\omega_i], \quad c_i \in \mathbb{C}
\end{equation}
The encoder (e.g., a designated coordinator) transmits the coefficient vector $\mathbf{c} = (c_1, \ldots, c_k) \in \mathbb{C}^k$ to all agents that require alignment.
The receiver reconstructs $\tilde{\omega} = \sum_{i=1}^{k} c_i \omega_i$. Since $[\omega] = [\tilde{\omega}]$ under noiseless transmission, we have $\omega - \tilde{\omega} \in \mathrm{im}(\delta^0)$. Thus there exists $\sigma \in C^0$ with $\delta^0(\sigma) = \omega - \tilde{\omega}$. The receiver solves this linear system (guaranteed to have a solution by construction) to obtain the global section $\sigma$ achieving semantic alignment.
Here, the protocol transmits exactly $k = \dim H^1$ symbols.
\end{proof}

\subsection{Converse: $\dim H^1$ is Necessary}
\begin{lemma}
Any semantic alignment protocol requires at least $k = \dim H^1(G, \mathcal{S})$ transmitted symbols.
\end{lemma}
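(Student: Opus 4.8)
The plan is to establish the converse by a fooling-set (indistinguishability) argument: any protocol that transmits fewer than $k = \dim H^1(G,\mathcal{S})$ symbols must assign the same communication transcript to two locally consistent configurations lying in distinct cohomology classes, and then the agents provably cannot align on both.

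First I would prove a \emph{reduction lemma}: a protocol solves the alignment problem for \emph{every} locally consistent $\omega \in \ker(\delta^1)$ only if the transcript determines the class $[\omega] \in H^1(G,\mathcal{S})$. The ``if'' direction is exactly the achievability lemma (knowing $[\omega]$, each agent reconstructs a representative cocycle $\tilde\omega$, and the residual $\omega - \tilde\omega \in \mathrm{im}(\delta^0)$ is removed by a purely local linear solve, with no further communication). For the ``only if'' direction, fix the protocol's shared randomness and suppose two locally consistent inputs $\omega,\omega'$ with $[\omega]\neq[\omega']$ induce the same transcript $\tau$. Each agent's subsequent computation depends only on its own local data and $\tau$, so the agents reconstruct the \emph{same} target $\tilde\omega$ in both runs and then attempt the local solves $\delta^0\sigma = \omega - \tilde\omega$ and $\delta^0\sigma = \omega' - \tilde\omega$. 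If both succeeded, then $\omega - \omega' = (\omega-\tilde\omega) - (\omega'-\tilde\omega) \in \mathrm{im}(\delta^0)$, i.e.\ $[\omega] = [\omega']$, contradicting our assumption (cf.\ Proposition~1). Hence at least one run fails to align, so distinct classes must produce distinct transcripts.

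Next I would convert this injectivity into a dimension bound. The class map $\omega\mapsto[\omega]$ is a surjective $\mathbb{C}$-linear map $\ker(\delta^1)\twoheadrightarrow H^1\cong\mathbb{C}^{k}$, and the reduction lemma forces the protocol's encoding to separate points of $H^1$. An $m$-symbol transcript lives in a message space of capacity $m$: $m$ complex (or field) coordinates for classical and continuous-variable sheaves, or $m$ qubits, which by the Holevo bound faithfully carry at most $m$ bits of a classical label. A faithful encoding of a $k$-dimensional space into such a space forces $m \ge k$ --- equivalently, an injective linear map cannot decrease dimension. Hence $m \ge k = \dim H^1(G,\mathcal{S})$, and together with the achievability lemma this pins the minimum at exactly $\dim H^1$.

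The step I expect to be the main obstacle is the reduction lemma, specifically ruling out adaptive, multi-round, or randomized protocols that might beat the bound by exploiting the structure of a particular $\omega$ or by interleaving local computation with transmission. I would handle this by conditioning on all shared randomness, treating the entire interaction as a single transcript random variable, and noting that each agent's output is a deterministic function of (its local input, the full transcript); the fooling-input construction and the counting argument then go through verbatim, since only the \emph{total} number of transmitted symbols enters. A secondary subtlety is making the ``capacity $m$'' claim rigorous in the quantum case without entanglement assistance, which is exactly where the Holevo bound enters; the entanglement-assisted sharpening is deferred to Theorem~\ref{EntanglementCapacity}.
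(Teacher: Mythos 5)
Your proposal follows essentially the same route as the paper's converse: you show that locally consistent inputs in distinct cohomology classes must yield distinct transcripts (otherwise both residual solves $\delta^0\sigma=\omega-\tilde\omega$ and $\delta^0\sigma=\omega'-\tilde\omega$ succeeding would force $[\omega]=[\omega']$), and then conclude by a dimension count that an $r$-symbol message space cannot injectively encode the $k$-dimensional space $H^1$ unless $r\ge k$. This is the paper's argument (injectivity of the induced map $\overline{\mathrm{Enc}}:H^1\to\mathbb{C}^r$ plus dimension counting), with your version being somewhat more explicit about adaptive, multi-round, and randomized protocols and about the Holevo step in the quantum case.
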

\begin{IEEEproof}
Suppose, for contradiction, that a protocol achieves perfect semantic alignment using $r < k$ transmitted symbols. The encoder is then a function $\mathrm{Enc}: \ker(\delta^1) \to \mathbb{C}^r$ mapping locally consistent configurations to $r$-symbol codewords.
For perfect alignment, configurations with different cohomology classes must map to different codewords. If $[\omega] \neq [\omega']$ in $H^1$, then $\mathrm{Enc}(\omega) \neq \mathrm{Enc}(\omega')$; otherwise the decoder cannot distinguish them.
Consider the induced map:
\begin{equation}
\overline{\mathrm{Enc}}: H^1 \to \mathbb{C}^r, \quad [\omega] \mapsto \mathrm{Enc}(\omega)
\end{equation}
This is well-defined: if $[\omega] = [\omega']$, then $\omega - \omega' = \delta^0(\sigma)$ for some $\sigma$, and both configurations lead to the same aligned state without communication. For perfect alignment, $\overline{\mathrm{Enc}}$ must be injective. But an injective linear map from a $k$-dimensional space to an $r$-dimensional space requires $r \geq k$. Since $r < k$ by assumption, $\overline{\mathrm{Enc}}$ cannot be injective—there exist $[\omega] \neq [\omega']$ with identical codewords, making alignment impossible.
\end{IEEEproof}

\section{Proof of Theorem 2}
\label{proof_Theorem2}
Using the cochain complex from Appendix~\ref{proof_Theorem1}, we augment the semantic sheaf with entanglement. For each edge $e = (u, v)$, the entangled state admits Schmidt decomposition:
\begin{equation}
|\Phi_e\rangle = \sum_{i=1}^{r_e} \lambda_i |a_u^i\rangle \otimes |a_v^i\rangle
\end{equation}
with Schmidt rank $r_e$ and coefficients $\lambda_i > 0$.
\subsection{Part 1: Achievability (Upper Bound)}
We first show $\dim H^1_{EA} \leq \max(0, \dim H^1 - \sum_e \log_2 r_e)$. The Schmidt decomposition of $\Phi_e$ establishes a canonical correspondence between $r_e$ basis vectors at agents $u$ and $v$. This shared reference frame resolves ambiguity in how local semantic states are compared.
By the superdense coding protocol \cite{Bennett2002EntanglementAssisted}, $r_e$ shared dimensions enable transmission of $\log_2 r_e$ classical bits without additional channel use. These bits can specify which of $r_e$ obstruction classes applies on edge $e$.
Let $\{\omega_1, \ldots, \omega_k\}$ be generators of $H^1$ with $k = \dim H^1$. Decompose each cocycle by edge support: $\omega = \sum_e \omega_e$. The entanglement $\Phi_e$ trivializes obstructions within the $r_e$-dimensional shared subspace, converting cocycles to coboundaries in $\mathcal{S}_{EA}$. Each edge contributes resolution capacity $\log_2 r_e$. Total resolution: $\sum_{e \in E} \log_2 r_e$. The remaining obstruction satisfies:
\begin{equation}
\dim H^1_{EA} \leq \dim H^1 - \sum_{e \in E} \log_2 r_e
\end{equation}
Since dimension is non-negative, $\dim H^1_{EA} \leq \max(0, \dim H^1 - \sum_e \log_2 r_e)$.
\subsection{Part 2: Converse (Lower Bound)}
We show $\dim H^1_{EA} \geq \max(0, \dim H^1 - \sum_e \log_2 r_e)$. Entanglement on edge $e$ only affects obstructions with support on $e$. A cocycle $\omega$ with global support across multiple edges cannot be fully resolved by entanglement on any single edge.
Resolving a $k$-dimensional obstruction space requires $\log_2 k$ bits of information. The entanglement across all edges provides at most $\sum_e \log_2 r_e$ bits of shared reference. By a counting argument, at least $\dim H^1 - \sum_e \log_2 r_e$ dimensions of obstruction remain.
For any cocycle $\omega \in Z^1$ orthogonal to the subspaces spanned by entanglement-induced identifications, we have $[\omega] \neq 0$ in $H^1_{EA}$. The space of such cocycles has dimension at least $\max(0, \dim H^1 - \sum_e \log_2 r_e)$.

  \subsection{Part 3: Classical Separation}

For separable states $\rho_e = \sum_i p_i \rho_u^i \otimes \rho_v^i$, the Schmidt rank is $r_e = 1$. Thus:
\begin{equation}
\sum_{e \in E} \log_2 r_e = \sum_{e \in E} \log_2 1 = 0
\end{equation}
Therefore $\dim H^1_{\mathrm{classical}} = \dim H^1$: classical correlations provide zero obstruction reduction.

This completes the proof.

\bibliographystyle{IEEEtran}
\bibliography{semantics_ref}

@Article{Carnap52,
  author       =   "R. Carnap and Y. Bar-Hillel",
  title 			 =   "{An Outline of a Theory of Semantic Information}",
  journal 		 = 	 "Technical Report No. 247",
  year         = 	 "1952",
  month        = 	 "Oct."
}

@article{wu2024deep,
  title={Deep joint semantic coding and beamforming for near-space airship-borne massive MIMO network},
  author={M. Wu and Z. Gao and Z. Wang and D. Niyato and G. K. Karagiannidis and S. Chen},
  journal={IEEE Journal on Selected Areas in Communications},
  year={2024},
  publisher={IEEE}
}

@Article{XieTSP2021,
  author       =   "H. Xie and Z. Qin and G. Y. Li and B-H. Juang",
  title 			 =   "{Deep Learning Enabled Semantic
Communication Systems}
",
  journal 		 = 	 "IEEE Transactions on Signal Processing",
  year         = 	 "2021",
  month       = "Apr.",
  volume      = "69",
  pages      = "2663-2675"
}

@Article{KountourisCommMag2021,
  author       =   "M. Kountouris and N. Pappas",
  title 			 =   "Semantics-empowered communication for networked intelligent systems",
  journal 		 = 	 "IEEE Communications Magazine",
  year         = 	 "202",
  pages        = "96-102",
  volume       = "59",
  number       = "6",
  month = "Jun."
}

@book{Hatcher2002,
  author    = {Allen Hatcher},
  title     = {Algebraic Topology},
  publisher = {Cambridge University Press},
  year      = {2002}
}

@article{Gunduz2022,
  author    = {D. G{\"u}nd{\"u}z and Z. Qin and I. E. Aguerri and H. S. Dhillon and Z. Yang and A. Yener and K. K. Wong and C. B. Chae},
  title     = {Beyond Transmitting Bits: Context, Semantics, and Task-Oriented Communications},
  journal   = {IEEE Journal on Selected Areas in Communications},
  volume    = {41},
  number    = {1},
  pages     = {5--41},
  year      = {2022}
}

@Article{Shannon1948,
  author       =   "C. E. Shannon",
  title 			 =   "A Mathematical Theory of Communication",
  journal 		 = 	 "The Bell System Technical Journal",
  year         = 	 "1948",
  volume      = "27",
  number      = "3",
  pages       = "379-423",
}

@article{grimaldi2025learning,
  title={Learning Network Sheaves for AI-native Semantic Communication},
  author={E. Grimaldi and M. E. Pandolfo and G. D'Acunto and S. Barbarossa and P. Di Lorenzo},
  journal={arXiv preprint arXiv:2512.03248},
   year={2025}
}

@Article{ChaccourArxiv2022,
  author       =   "C. Chaccour and W. Saad and M. Debbah and Z. Han and H. V. Poor",
  title 			 =   "{Less Data, More Knowledge: Building Next Generation Semantic Communication Networks}",
  journal 		 = 	 "IEEE Communications Surveys \& Tutorials",
  year         = 	 "2022",
  month = "Nov."
}

@Article{ThomasISIT2026,
  author       =   "C. K. Thomas and M. Chen",
  title 			 =   "{Fundamental Limits of Quantum Semantic
Communication via Sheaf Cohomology}",
  journal 		 = 	 "available at arxiv",
year="2026"
}

@Article{ChristoTWCArxiv2022,
  author       =   "C. K. Thomas and W. Saad",
  title 			 =   "{Neuro-Symbolic Causal Reasoning Meets Signaling
Game for Emergent Semantic Communications}",
  journal 		 = 	 "IEEE Transactions on Wireless Communications",
month = "Oct.",
  year = "2023"
}

@Article{ChristoJSAITArxiv2023,
  author       =   "C. K. Thomas and W. Saad and Y. Xiao",
  title 			 =   "{Causal Semantic Communication for Digital
Twins: A Generalizable Imitation Learning
Approach}",
  journal 		 = 	 " arXiv e-prints. Apr:arXiv-2304",
  year = "2023"
}

@article{Lu2025,
  author    = {C. Lu},
  title     = {A Semantic Generalization of Shannon’s Information Theory and Applications},
  journal   = {Entropy},
  volume    = {27},
  number    = {5},
  pages     = {461},
  month     = {May},
  year      = {2025}
}

@article{Tononi2016,
  author    = {G. Tononi and M. Boly and M. Massimini and C. Koch},
  title     = {Integrated information theory: from consciousness to its physical substrate},
  journal   = {Nature Reviews Neuroscience},
  volume    = {17},
  number    = {7},
  pages     = {450--461},
  year      = {2016}
}

@book{Nielsen2010,
  author    = {M. A. Nielsen and I. L. Chuang},
  title     = {Quantum Computation and Quantum Information},
  edition   = {10th Anniversary ed.},
  publisher = {Cambridge University Press},
  year      = {2010}
}

@article{Ollivier2001,
  author    = {H. Ollivier and W. H. Zurek},
  title     = {Quantum discord: a measure of the quantumness of correlations},
  journal   = {Physical Review Letters},
  volume    = {88},
  number    = {1},
  pages     = {017901},
  year      = {2001}
}

@article{Holevo1973,
  author    = {A. S. Holevo},
  title     = {Bounds for the quantity of information transmitted by a quantum communication channel},
  journal   = {Problemy Peredachi Informatsii},
  volume    = {9},
  number    = {3},
  pages     = {3--11},
  year      = {1973}
}

@article{bourtsoulatze2019deep,
  author    = {E. Bourtsoulatze and D.B. Kurka and D. Gündüz},
  title     = {Deep Joint Source-Channel Coding for Wireless Image Transmission},
  journal   = {IEEE Transactions on Cognitive Communications and Networking},
  volume    = {5},
  number    = {3},
  pages     = {567--579},
  year      = {2019},
  publisher = {IEEE}
}

@article{ghalkha2025sheafalign,
  author    = {A. Ghalkha and Z. Tian and C.B. Issaid and M. Bennis},
  title     = {SheafAlign: A Sheaf-theoretic Framework for Decentralized Multimodal Alignment},
  journal   = {arXiv preprint arXiv:2510.20540},
  year      = {2025}
}

@inproceedings{pandolfo2025latent,
  author    = {M.E. Pandolfo and S. Fiorellino and E.C. Strinati and P. Di Lorenzo},
  title     = {Latent Space Alignment for AI-Native MIMO Semantic Communications},
  booktitle = {Proceedings of IEEE International Joint Conference on Neural Networks (IJCNN)},
  year      = {2025},
  month     = {Jun.}
}

@article{Chehimi2024QuantumSemantic,
  author  = {M.Chehimi and C.Chaccour and C.K.Thomas and W.Saad},
  title   = {Quantum Semantic Communications for Resource-Efficient Quantum Networking},
  journal = {IEEE Communications Letters},
  volume  = {28},
  number  = {4},
  pages   = {803--807},
  year    = {2024}
}

@article{Bennett2002EntanglementAssisted,
  author  = {C.H.Bennett and et~al.},
  title   = {Entanglement-Assisted Capacity of a Quantum Channel and the Reverse Shannon Theorem},
  journal = {IEEE Transactions on Information Theory},
  volume  = {48},
  number  = {10},
  year    = {2002}
}

@article{Abramsky2012CohomologyNonlocality,
  author  = {S.Abramsky and et~al.},
  title   = {The Cohomology of Non-Locality and Contextuality},
  journal = {arXiv preprint arXiv:1111.3620},
  year    = {2012}
}

@article{Caru2017CohomologyContextuality,
  author  = {G.Car\`u},
  title   = {On the Cohomology of Contextuality},
  journal = {arXiv preprint arXiv:1701.00656},
  year    = {2017}
}

@article{Hansen2019SpectralSheaves,
  author  = {J.Hansen and R.Ghrist},
  title   = {Toward a Spectral Theory of Cellular Sheaves},
  journal = {Journal of Applied and Computational Topology},
  year    = {2019}
}

@book{Floridi2011PhilosophyOfInformation,
  author    = {L.Floridi},
  title     = {The Philosophy of Information},
  publisher = {Oxford University Press},
  year      = {2011}
}

@article{AbramskyBrandenburger2011,
  author  = {S.Abramsky and A.Brandenburger},
  title   = {The Sheaf-Theoretic Structure of Non-Locality and Contextuality},
  journal = {New Journal of Physics},
  volume  = {13},
  pages   = {113036},
  year    = {2011}
}

\end{document}